\newcommand{\lyxaddress}[1]{
	\par {\raggedright #1
	\vspace{1.4em}
	\noindent\par}
}
\theoremstyle{plain}
\newtheorem{thm}{\protect\theoremname}
\theoremstyle{definition}
\newtheorem{defn}[thm]{\protect\definitionname}
\theoremstyle{definition}
\newtheorem{example}[thm]{\protect\examplename}
\theoremstyle{plain}
\newtheorem{cor}[thm]{\protect\corollaryname}
\theoremstyle{remark}
\newtheorem{claim}[thm]{\protect\claimname}
\theoremstyle{remark}
\newtheorem{rem}[thm]{\protect\remarkname}
\providecommand{\claimname}{Claim}
\providecommand{\corollaryname}{Corollary}
\providecommand{\definitionname}{Definition}
\providecommand{\examplename}{Example}
\providecommand{\remarkname}{Remark}
\providecommand{\theoremname}{Theorem}
\begin{document}
\title{\textbf{How to Find New Characteristic-Dependent Linear Rank Inequalities
using Binary Matrices as a Guide}}
\author{Victor Peña\thanks{e-mail: \protect\href{mailto:vbpenam\%40unal.edu.co}{vbpenam@unal.edu.co}}
and Humberto Sarria\thanks{e-mail: \protect\href{mailto:hsarriaz\%40unal.edu.co}{hsarriaz@unal.edu.co}}}
\date{Departamento de Matemáticas, Universidad Nacional de Colombia, Bogotá,
Colombia}

\maketitle
In Linear Algebra over finite fields, a characteristic-dependent linear
rank inequality is a linear inequality that holds by ranks of subspaces
of a vector space over a finite field of determined characteristic,
and does not in general hold over other characteristics. In this paper,
we show a method to produce these inequalities using binary matrices
with suitable ranks over different fields. In particular, for each
$n\geq7$, we produce $2\left\lfloor \frac{n-1}{2}\right\rfloor -4$
characteristic-dependent linear rank inequalities over $n$ variables.
Many of the inequalities obtained are new but some of them imply the
inequalities presented in \cite{4,14}.

\lyxaddress{\textbf{Keywords: }Linear rank inequality, complementary vector space,
binary matrix.\textbf{\footnotesize{}\vspace{-0.2cm}
}}

\lyxaddress{\textbf{\footnotesize{}Subject Classification: 68P30\vspace{-0.5cm}
}}

\section{Introduction}

A \emph{linear rank inequality} is a linear inequality that is always
satisfied by ranks (dimensions) of subspaces of a vector space over
any field. Information inequalities are a sub-class of linear rank
inequalities \cite{15}. The Ingleton inequality is an example of
a linear rank inequality which is not information inequality \cite{13}.
Other inequalities have been presented in \cite{8,13}. A \emph{characteristic-dependent
linear rank inequality} is as a linear rank inequality but this is
always satisfied by vector spaces over fields of certain characteristic
and does not in general hold over other characteristics. In Information
Theory, especially in linear network coding, all these inequalities
are useful to calculate the linear capacity of communication networks
\cite{9}. It is remarkable that the linear capacity of a network
depends on the characteristic of the scalar field associated to the
vector space of the network codes, as an example, the Fano network
\cite{6,9}. Therefore, when we study linear capacities over specific
fields, characteristic-dependent linear rank inequalities are more
useful than usual linear rank inequalities.

Characteristic-dependent linear rank inequalities have been presented
by Blasiak, Kleinberg and Lubetzky \cite{4}, Dougherty, Freiling
and Zeger \cite{9}, and E. Freiling \cite{11}. The technique used
by Dougherty et al. to produce these inequalities used as a guide
the flow of some matroidal network to obtain restriction over linear
solubility of these and it is different from the technique used by
Blasiak et al. which is based on the dependency relations of the Fano
and non-Fano Matroids. In \cite{14}, we show some inequalities using
the ideas of Blasiak and present some applications to network coding
that improve some existing results in \cite{4,11}.

\textbf{Organization of the work and contributions. }We show a general
method to produce characte-ristic-dependent linear rank inequalities
using as a guide binary matrices with suitable rank over different
fields. We try to find as many inequalities as the method can produce:
For each $n\geq7$, we explicitly produce $2\left\lfloor \frac{n-1}{2}\right\rfloor -4$
characteristic-dependent linear rank inequalities in $n$ variables
of which half are true over characteristics in sets of primes of the
form $\left\{ p:p\mid t\right\} $ and the other half are true over
characteristics in sets of primes of the form $\left\{ p:p\nmid t\right\} $,
where $2\leq t\leq\left\lfloor \frac{n-1}{2}\right\rfloor -1$, but
we note that more inequalities can be produced. Also, for the first
class of inequalities, we prove that all are independent of each other
and they can not be recovered from any of our inequalities in a greater
number of variables. We remark that to date such number of inequalities
of this type in $n$ variables were not known. In addition, the inequalities
presented in \cite{14} can be recovered when $n$ is of the form
$2m+3$ and $t$ is equal to $m$.

\section{Entropy in Linear Algebra}

Let $A$, $A_{1}$, $\ldots$, $A_{n}$, $B$ be vector subspaces
of a finite dimensional vector space $V$ over a finite field $\mathbb{F}$.
Let $\underset{i\in I}{\sum}A_{i}$ denote the span or sum of $A_{i}$.
The sum $A+B$ is a direct sum if and only if $A\cap B=O$, the notation
for such a sum is $A\oplus B$. Subspaces $A_{1}$, ..., $A_{n}$
are called \emph{mutually complementary} subspaces in $V$ if every
vector of $V$ has an unique representation as a sum of elements of
$A_{1}$, ..., $A_{n}$. Equivalently, they are mutually complementary
subspaces in $V$ if and only if $V=A_{1}\oplus\cdots\oplus A_{n}$.
In this case, $\pi_{S}$ denotes the canonical projection function
$V\twoheadrightarrow\underset{i\in S}{\bigoplus}A_{i}$. $\left\{ e_{i}\right\} $
is the canonical bases in $V$ and $e_{S}$ is the vector whose inputs
are $1$ in the components in $S$ and $0$ in another case.

There is a correspondence between linear rank inequalities and information
inequalities associated to certain class of random variables induced
by vector spaces \cite[Theorem 2]{15}, we explain that: Let $f$
be chosen uniformly at random from the set of linear functions from
$V$ to $\mathbb{F}$, for $A_{1}$, $\ldots$, $A_{n}$ define the
random variables $X_{1}=f\mid_{A_{1}}$, $\ldots$, $X_{n}=f\mid_{A_{n}}$,
then
\[
\mathrm{H}\left(X_{i}:i\in I\right)=\log\left|\mathbb{F}\right|\dim\left(\underset{i\in I}{\sum}A_{i}\right),\text{ \ensuremath{I\subseteq\left[n\right]:=\left\{ 1,\ldots,n\right\} }}.
\]
The difference between entropy and dimension is a fixed positive multiple
scalar. Therefore, any inequality satisfied by entropies, it is an
inequality satisfied by dimensions of vector spaces; for simplicity,
we identify these parameters, i. e. the entropy of $A_{1}$, $\ldots$,
$A_{n}$ is
\[
\mathrm{H}\left(A_{i}:i\in I\right)\overset{\text{def.}}{=}\dim\left(\underset{i\in I}{\sum}A_{i}\right).
\]
So, we can think $A_{1}$, $\ldots$, $A_{n}$ as a tuple of random
variables induced in described form, such random variables are called
linear random variables over $\mathbb{F}$.

The mutual information of $A$ and $B$ is given by $\text{I}\left(A;B\right)=\dim\left(A\cap B\right).$
If $B$ is a subspace of a subspace $A$, then we denote the \emph{codimension}
of $B$ in $A$ by $\text{codim}_{A}\left(B\right):=\text{H}\left(A\right)-\text{H}\left(B\right).$
We have that $\text{H}\left(A\mid B\right)=\text{codim}_{A}\left(A\cap B\right)$.
In a similar way conditional mutual information is expressed.

We formally define the inequalities that concern this paper:
\begin{defn}
Let $m$ be a positive integer, let $P$ be a set of primes, and let
$S_{1}$, $\ldots$, $S_{k}$ be subsets of $\left\{ 1,\ldots,m\right\} $.
Let $\alpha_{i}\in\mathbb{R}$ for $1\leq i\leq k$. A linear inequality
of the form 
\[
\stackrel[i=1]{k}{\sum}\alpha_{i}\mathrm{H}\left(A_{j}:j\in S_{i}\right)\geq0
\]

- is called a \emph{characteristic-dependent linear rank inequality}
if it holds for all jointly distributed linear random variables $A_{1}$,
$\ldots$, $A_{m}$ over finite fields with characteristic in $P$.

- is called a \emph{linear rank inequality} if it is a characteristic-dependent
linear rank inequality with $P$ is equal to the collection of all
prime numbers.

- is called an \emph{information inequality} if the inequality holds
for all jointly distributed random variables.
\end{defn}

The following inequality is the first linear rank inequality which
is not information inequality.
\begin{example}
(Ingleton's inequality \cite{12}) For any $A_{1}$, $A_{2}$, $A_{3}$
and $A_{4}$ vector subspaces of a finite dimensional vector space,
$\mathrm{I}\left(A_{1};A_{2}\right)\leq\mathrm{I}\left(A_{1};A_{2}\mid A_{3}\right)+\mathrm{I}\left(A_{1};A_{2}\mid A_{4}\right)+\mathrm{I}\left(A_{3};A_{4}\right).$
\end{example}

We are interested in finding interesting characteristic-dependent
linear rank inequalities i.e. where $P$ is a proper subset of primes.

\subsection{Producing inequalities: How to find and use a suitable binary matrix}

The following theorem is the principal theorem of this paper and shows
a method to produce pairs of characteristic-dependent linear rank
inequalities from suitable binary matrices. The demonstration is presented
in subsection \ref{subsec:secci=0000F3n de prueba del teorema principal}.
We use this notation: $\left[e_{n},e_{m}\right]=\left\{ e_{i}:n\leq i\leq m\right\} $,
$\left[e_{n},e_{m}\right)=\left\{ e_{i}:n\leq i<m\right\} $ and $\left[e_{n}\right]:=\left[e_{1},e_{n}\right]=\left\{ e_{i}\right\} $;
for a binary matrix $B$, we denote $B=\left(B^{i}\right)=\left(e_{S_{i}}\right)$,
with $S_{i}=\left\{ j:B_{\left(j,i\right)}=1\right\} $.
\begin{thm}
\label{thm: principal teorema o metodo que usa matrices binarias}Let
$B=\left(B^{i}\right)=\left(e_{S_{i}}\right)$ be a $n\times m$ binary
matrix over $\mathbb{F}$, $m\leq n$ and $t\geq2$ integer. We suppose
that $\text{rank}B=m$ if $\text{char}\mathbb{F}$ does not divide
$t$, and $\text{rank}B=m-1$ in other cases. Let $A_{e_{i}}$, $e_{i}\in\left[e_{n}\right]$,
$B_{e_{S_{i}}}$, $e_{S_{i}}\in\mathcal{B}'$ and $C$ be vector subspaces
of a finite dimensional vector space $V$ over $\mathbb{F}.$ Then

(i) The following inequality is a characteristic-dependent linear
rank inequality over fields whose characteristic divides $t$,
\[
\text{H}\left(A_{e_{j}},B_{e_{S_{i}}},C:e_{S_{i}}\in\mathcal{B}',e_{j}\in\mathcal{B}'',C\in\mathcal{B}'''\right)+\left(\left|\mathcal{B}''\right|\left|\mathcal{B}'\right|+\left|\mathcal{B}''\right|\right)\text{H}\left(C\right)\leq\left(m-1\right)\text{I}\left(A_{\left[e_{n}\right]};C\right)
\]
\[
+\underset{e_{S_{k}}\in\mathcal{B}'}{\sum}\left[\text{H}\left(E_{e_{S_{k}}}\mid A_{e_{i}}:i\in S_{k}\right)+\text{H}\left(E_{e_{S_{k}}}\mid A_{e_{i}},C:i\notin S_{k}\right)\right]+\left(\left|\mathcal{B}'\right|+1\right)\underset{e_{i}\in\mathcal{B}''}{\sum}\text{H}\left(A_{e_{i}}\right)
\]
\[
+\left(\left|\mathcal{B}''\right|\left|\mathcal{B}'\right|+\left|\mathcal{B}'''\right|+\left|\mathcal{B}''\right|+\left|\mathcal{B}'\right|\right)\left[\text{H}\left(C\mid A_{\left[e_{n}\right]}\right)+\underset{e_{i}\in\left[e_{n}\right]}{\sum}\text{I}\left(A_{\left[e_{n}\right]-e_{i}};C\right)\right]
\]
\[
+\underset{e_{S_{k}}\in\mathcal{B}'}{\sum}\left[\nabla\left(A_{e_{i}}:i\in S_{k},e_{i}\notin\mathcal{B}''\right)+\nabla\left(A_{e_{i}}:i\notin S_{k},e_{i}\notin\mathcal{B}''\right)\right],
\]

(ii) The following inequality is a characteristic-dependent linear
rank inequality over fields whose characteristic does not divide $t$,
\[
\text{H}\left(C\right)\leq\frac{1}{m}\text{H}\left(A_{e_{j}},B_{e_{S_{i}}},C:e_{S_{i}}\in\mathcal{B}',e_{j}\in\mathcal{B}'',C\in\mathcal{B}'''\right)+\text{H}\left(C\mid A_{\left[e_{n}\right]}\right)+\underset{e_{i}\in\left[e_{n}\right]}{\sum}\text{I}\left(A_{\left[e_{n}\right]-e_{i}};C\right)
\]
\[
+\underset{e_{S_{k}}\in\mathcal{B}'}{\sum}\left[\text{H}\left(C\mid A_{e_{i}},B_{S_{k}}:i\notin S_{k}\right)+\text{H}\left(B_{e_{S_{k}}}\mid A_{e_{i}}:i\in S_{k}\right)+\nabla\left(A_{e_{i}}:i\notin S_{k}\right)+\nabla\left(A_{e_{i}}:i\in S_{k}\right)\right],
\]
where $\mathcal{B}'=\left\{ e_{S_{i}}:1<\left|S_{i}\right|<n\right\} $;
$\mathcal{B}''=\left\{ e_{S_{i}}:\left|S_{i}\right|=1\right\} $;
$\mathcal{B}'''=\left\{ C\right\} $ if there exists $e_{S_{i}}$
in $B$ such that $\left|S_{i}\right|=n$, and $\mathcal{B}'''$ is
empty in other case; and $\nabla$ is a finite sum of entropies given
by
\[
\nabla\left(A_{e_{i}}:e_{i}\in T\subseteq\left[n\right]\right)\overset{def.}{=}\text{I}\left(A_{\left[e_{1},e_{k_{1}}\right)};A_{\left[e_{k_{1}},e_{k_{2}}\right]}\right)+\cdots+\text{I}\left(A_{\left[e_{1},e_{k_{1}-1}\right)};A_{\left[e_{k_{1-1}},e_{k_{l}}\right]}\right)
\]
where $k_{1}\leq k_{2}\leq\cdots\leq k_{l}$ give a partition in intervals,
with maximum length, of $T$.
\end{thm}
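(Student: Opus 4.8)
The plan is to read every term on the right-hand side of each inequality as a penalty measuring how far the given subspaces deviate from an idealized coding configuration dictated by $B$, and then to reduce to that idealized configuration, where the characteristic of $\mathbb{F}$ enters only through $\mathrm{rank}\,B$. Concretely, the blocks $\mathrm{H}(C\mid A_{[e_n]})$ and $\sum_{e_i\in[e_n]}\mathrm{I}(A_{[e_n]-e_i};C)$ encode that $C$ is an ``all-or-nothing'' share, recoverable from $A_{e_1},\ldots,A_{e_n}$ together but from no $n-1$ of them, while each $\mathrm{H}(B_{e_{S_k}}\mid A_{e_i}:i\in S_k)$, respectively $\mathrm{H}(C\mid A_{e_i},B_{S_k}:i\notin S_k)$, measures how far $B_{e_{S_k}}$ is from lying in $\sum_{j\in S_k}A_{e_j}$, respectively how far $C$ is from being recoverable from $B_{e_{S_k}}$ together with the complementary $A_{e_j}$. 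First I would use submodularity to absorb all of these conditional-entropy and mutual-information terms, together with the $\nabla$ terms, so that up to these penalties I may assume the $A_{e_i}$ are mutually complementary, each $B_{e_{S_k}}$ equals $\bigoplus_{j\in S_k}A_{e_j}$, and $C$ is a single global line recoverable from each coded block.

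The role of the $\nabla$ terms is precisely to pay for replacing $\mathrm{H}\!\left(\sum A_{e_i}\right)$ by $\sum\mathrm{H}(A_{e_i})$: each $\nabla(A_{e_i}:e_i\in T)$ telescopes the pairwise intersection defects along a maximal interval partition of $T$, which is the identity $\mathrm{H}(A+B)=\mathrm{H}(A)+\mathrm{H}(B)-\mathrm{I}(A;B)$ iterated. After this normalization the heart of the argument is one algebraic statement: pick a generator $c$ of $C$ and, using recoverability, write its image in each coded block $B_{e_{S_i}}$ as a combination $\sum_{j\in S_i}a_j$ with $a_j\in A_{e_j}$; the matrix $B$ governs the linear relations among these $m$ coded representations of $c$. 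When $\mathrm{char}\,\mathbb{F}\nmid t$ the columns of $B$ are independent, so the $m$ coded copies of $c$ are linearly independent and $c$ contributes dimension $1$ to each of $m$ disjoint pieces of the total space, which yields the factor $\tfrac1m$ in (ii). When $\mathrm{char}\,\mathbb{F}\mid t$ the columns satisfy a dependency, so one coded copy is redundant; reading this dependency as a forced collapse of one dimension produces the sharpened coefficient $(m-1)$ on $\mathrm{I}(A_{[e_n]};C)$ in (i).

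With the idealized statement in hand, I would reassemble the full inequality by reintroducing the penalty terms one at a time, each time invoking submodularity or the Ingleton inequality to bound the error, and splitting the columns according to the partition $\mathcal{B}',\mathcal{B}'',\mathcal{B}'''$: the weight-one columns $\mathcal{B}''$ contribute the $\sum_{e_i\in\mathcal{B}''}\mathrm{H}(A_{e_i})$ terms, the intermediate columns $\mathcal{B}'$ supply the summed bracketed penalties, and the full column, when it exists, is what makes $\mathcal{B}'''=\{C\}$. The main obstacle is not any single estimate but the coefficient bookkeeping: one must verify that after all applications of submodularity each term emerges with exactly the stated multiplicity, in particular the compound coefficients $|\mathcal{B}''||\mathcal{B}'|+|\mathcal{B}''|$, $|\mathcal{B}'|+1$, and $|\mathcal{B}''||\mathcal{B}'|+|\mathcal{B}'''|+|\mathcal{B}''|+|\mathcal{B}'|$, and one must check that the rank hypothesis is invoked at precisely the step that distinguishes (i) from (ii), so that no term secretly depends on the characteristic except through the single $\mathrm{rank}\,B$ input.
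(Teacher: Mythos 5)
Your high-level skeleton does match the paper's: both arguments reduce to an idealized ``complementary'' configuration in which the rank dichotomy of $B$ (rank $m$ versus $m-1$ according to whether $\mathrm{char}\,\mathbb{F}$ divides $t$) becomes the exact equation $\mathrm{H}\left(\pi_{S_j}(C):j\in[m]\right)=m\,\mathrm{H}(C)$ or $(m-1)\,\mathrm{H}(C)$, and both read the right-hand-side terms as the price of that reduction. The genuine gap is at the step you dispatch with ``use submodularity to absorb all of these terms, so that up to these penalties I may assume'' the idealized configuration. That reduction is the technical heart of the proof, and it is not a submodularity argument (nor an Ingleton argument --- Ingleton is never used in the paper; it appears only as an example). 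What is actually required is an explicit construction of nested subspaces: first a complementary tuple $\left(A'_{e_1},\ldots,A'_{e_n},\bar{C}\right)$ whose codimensions inside the original spaces are controlled by $\nabla(C)$ and the $\nabla(A_\cdot)$ terms; then, for part (i), further intersections $\bar{A}_{e_k}:=A'_{e_k}\cap\left(\bar{C}+\cdots\right)$ and $\bar{B}_{e_{S_k}}:=B_{e_{S_k}}\cap\left(\underset{i\in S_k}{\bigoplus}\bar{A}_{e_i}\right)\cap\left(\underset{i\notin S_k}{\bigoplus}\bar{A}_{e_i}\oplus\bar{C}\right)$ enforcing the containments $A_{e_i}\leq A_{[e_n]-e_i}\oplus C$, $B_{e_{S_k}}\leq\underset{i\in S_k}{\bigoplus}A_{e_i}$ and $B_{e_{S_k}}\leq\underset{i\notin S_k}{\bigoplus}A_{e_i}\oplus C$; and, for part (ii), a \emph{different} tuple $\hat{B}_{e_{S_k}}:=B_{e_{S_k}}\cap\underset{i\in S_k}{\bigoplus}A'_{e_i}$, $\hat{C}:=\bar{C}\cap\bigcap_k\left(\underset{i\notin S_k}{\bigoplus}A'_{e_i}+\hat{B}_{e_{S_k}}\right)$ enforcing instead $C\leq\underset{i\notin S_k}{\bigoplus}A_{e_i}+B_{e_{S_k}}$. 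The dimension losses of these constructions are bounded by codimension lemmas for intersections of subspaces --- lattice facts about vector spaces with no analogue for general random variables --- and verifying that the modified tuple is still complementary is a nontrivial induction that exploits $\dim\bar{C}\leq1$ (any nonzero vector of $\bar{C}$ generates it). None of this can be manufactured from submodularity, which cannot even produce a subspace playing the role of an intersection.

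A concrete symptom of the missing construction is that your plan cannot explain why parts (i) and (ii) carry different penalty terms: the paper needs two different modified tuples because (i) is an upper bound $\mathrm{H}(\cdots)\leq(m-1)\mathrm{H}(\bar{C})$ requiring each $B_{e_{S_k}}$ to be squeezed inside both $\underset{i\in S_k}{\bigoplus}\bar{A}_{e_i}$ and $\underset{i\notin S_k}{\bigoplus}\bar{A}_{e_i}\oplus\bar{C}$, while (ii) is a lower bound $m\,\mathrm{H}(\hat{C})\leq\mathrm{H}(\cdots)$ requiring the reverse containment $C\leq\underset{i\notin S_k}{\bigoplus}A_{e_i}+B_{e_{S_k}}$. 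Likewise, the compound coefficients $\left|\mathcal{B}''\right|\left|\mathcal{B}'\right|+\left|\mathcal{B}''\right|$, $\left|\mathcal{B}'\right|+1$, and $\left|\mathcal{B}''\right|\left|\mathcal{B}'\right|+\left|\mathcal{B}'''\right|+\left|\mathcal{B}''\right|+\left|\mathcal{B}'\right|$ are literal counts of how many times each codimension bound is invoked across the columns in $\mathcal{B}'$ and $\mathcal{B}''$; since your proposal contains no construction whose invocations could be counted, the ``coefficient bookkeeping'' you defer is not a verification step but the missing proof itself.
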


The first inequality does not hold in general over vector spaces whose
characteristic does not divide $t$ and the second inequality does
not hold in general over vector spaces whose characteristic divides
$t$. A counter example would be in $V=\text{GF}\left(p\right)^{n}$,
take the vector spaces $A_{e_{i}}=\left\langle e_{i}\right\rangle $,
$e_{i}\in\left[e_{n}\right]$, $B_{e_{S_{j}}}=\left\langle e_{S_{j}}\right\rangle $,
$e_{S_{j}}\in\mathcal{B}'$, and $C=\left\langle \sum e_{i}\right\rangle $
Then, when $p$ does not divide $t$, first inequality does not hold;
and when $p$ divides $n$, second inequality does not hold.
\begin{cor}
\label{cor:corolario sobre cota de la dimension para que las desigualdades sean validas ind de caract}If
the dimension of vector space $V$ is at most $n-1$, then inequalities
implicated in Theorem \ref{thm: principal teorema o metodo que usa matrices binarias}
are true over any field.
\end{cor}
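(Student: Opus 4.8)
The plan is to prove Corollary \ref{cor:corolario sobre cota de la dimension para que las desigualdades sean validas ind de caract} by reducing both inequalities of Theorem \ref{thm: principal teorema o metodo que usa matrices binarias} to ordinary (characteristic-free) linear rank inequalities whenever $\dim V\leq n-1$. The key observation is that the rank hypothesis on the binary matrix $B$ is the \emph{only} place where the characteristic enters the proof of the main theorem. Specifically, the matrix $B$ has $n$ rows, so its columns $e_{S_i}$ live in $V$ only after we identify them with genuine vectors; the characteristic-dependence comes entirely from whether the $m$ columns of $B$ are linearly independent ($\operatorname{rank}B=m$) or satisfy one nontrivial dependency ($\operatorname{rank}B=m-1$). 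I would first make explicit that in the construction underlying both inequalities, the subspace $C$ is forced (via the mutual-information and conditional terms) to interact with the $A_{e_i}$ through the column space of $B$, and that the separation between the two cases hinges on a rank comparison inside an at-most-$n$-dimensional ambient configuration.

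First I would show the following rank fact: if $\dim V\leq n-1$, then for \emph{any} field $\mathbb{F}$ the relevant configuration cannot realize the generic (rank $m$) behavior in a way that violates either inequality, because the projection of $V$ onto the $n$ coordinate directions is already degenerate. Concretely, when $\dim V\leq n-1$ the subspaces $A_{e_1},\dots,A_{e_n}$ cannot be mutually complementary with each contributing an independent direction, so $\text{H}\left(C\mid A_{[e_n]}\right)$ together with the correction terms $\sum_{e_i\in[e_n]}\text{I}\left(A_{[e_n]-e_i};C\right)$ absorbs the deficit. The idea is that the terms appearing on the right-hand sides of both (i) and (ii) were engineered in the proof of the theorem precisely to measure the failure of the ideal linear-algebraic configuration, and a dimension bound of $n-1$ caps how large the left-hand side can be relative to these correction terms, independently of characteristic.

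The cleanest route, which I would pursue as the main line, is to verify that under $\dim V\leq n-1$ both inequalities follow from the Shannon-type (polymatroidal) inequalities alone, i.e. from submodularity and monotonicity of $\text{H}$, together with the nonnegativity of each $\nabla$ term and each conditional entropy. Since every summand on the right of (i) and (ii) is a nonnegative combination of conditional entropies and mutual informations (all of which are $\geq 0$ for linear random variables over any field), it suffices to bound the single ``bad'' left-hand quantity $\text{H}\left(A_{e_j},B_{e_{S_i}},C:\dots\right)$ by the dimension constraint. When $\dim V\leq n-1$, this joint entropy is at most $n-1$, and the coefficient structure ($m-1$ on $\text{I}\left(A_{[e_n]};C\right)$ in case (i), and the factor $\frac{1}{m}$ in case (ii)) is exactly what makes the inequality hold term-by-term without invoking the characteristic-sensitive rank of $B$.

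The hard part will be making the reduction precise: I must identify exactly which step in the proof of Theorem \ref{thm: principal teorema o metodo que usa matrices binarias} uses the hypothesis $\operatorname{rank}B=m$ versus $\operatorname{rank}B=m-1$, and show that when $\dim V\leq n-1$ that step can be replaced by an unconditional dimension bound valid over every field. I expect the obstacle to be bookkeeping rather than conceptual: tracking the large family $\mathcal{B}'$, $\mathcal{B}''$, $\mathcal{B}'''$ and confirming that the dimension cap $n-1$ dominates the left-hand joint entropy uniformly in both cases, so that the two inequalities collapse to a common characteristic-free statement. Once that single bounding step is established, nonnegativity of all remaining terms closes the argument over any field.
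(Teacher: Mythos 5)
Your opening diagnosis is correct --- the characteristic enters the proof of Theorem \ref{thm: principal teorema o metodo que usa matrices binarias} only through the rank hypothesis on $B$, i.e.\ through the equation of Claim \ref{claim: ecuaci=0000F3n que separa rango} --- and your second paragraph gestures at the right degeneration. But the ``main line'' you then pursue is not a proof and would fail. Bounding the joint entropy $\text{H}\left(A_{e_j},B_{e_{S_i}},C:\ldots\right)$ by $n-1$ and invoking nonnegativity of the right-hand terms establishes nothing: nonnegativity only gives that the right-hand side is $\geq 0$, and there is no reason it should exceed $n-1$ (it can be arbitrarily small while the left-hand side is positive), so no ``term-by-term'' comparison closes inequality (i). Worse, for inequality (ii) the joint entropy sits on the \emph{right}-hand side with coefficient $\frac{1}{m}$, and the quantity to be controlled is $\text{H}(C)$ on the left, so capping the joint entropy is aimed at the wrong object. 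Your assertion that under $\dim V\leq n-1$ both inequalities follow from Shannon-type inequalities alone is plausible, but it is exactly the statement to be proved; you never supply the mechanism that makes it true.

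The paper's argument is essentially one sentence, but it operates inside the proof of the theorem rather than on its statement. That proof replaces the arbitrary subspaces $A_{e_1},\ldots,A_{e_n},C$ by a constructed tuple $\left(A_{e_1}',\ldots,A_{e_n}',\bar{C}\right)$ satisfying the condition of complementary vector spaces: $V=A_{e_1}'\oplus\cdots\oplus A_{e_n}'$ and the sum of $\bar{C}$ with $\bigoplus_{i\neq k}A_{e_i}'$ is direct for every $k$. If $\dim V\leq n-1$, then since $\dim V=\sum_i\dim A_{e_i}'$, some $A_{e_k}'=O$; consequently $\bigoplus_{i\neq k}A_{e_i}'=V$, and directness of its sum with $\bar{C}$ forces $\bar{C}=O$ (whence also $\bar{A}_{e_j}=O$, $\bar{B}_{e_{S_i}}=O$ and $\hat{C}=O$ in the subsequent claims). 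Hence the matrix-guided equation of Claim \ref{claim: ecuaci=0000F3n que separa rango}, and with it the conditional inequalities of Claim \ref{claim: desigualdades condicionales} --- the only characteristic-dependent steps --- degenerate to trivialities valid over every field, while all remaining steps (the codimension estimates and the final assembly) are characteristic-free. This is precisely the localization your last paragraph says is needed; your proposal never finds it because you apply the dimension count to the original subspaces $A_{e_i}$, which need not be complementary at all, instead of to the constructed tuple $A_{e_i}'$.
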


\begin{cor}
\label{cor:corolario sobre eliminai=0000F3n de variable para que la desi sea rango lineal}If
some vector space in Theorem \ref{thm: principal teorema o metodo que usa matrices binarias}
is the zero space, the inequalities implicated are linear rank inequalities.\footnote{One can use software such as Xitip to note that they most be Shannon
information inequalities.}
\end{cor}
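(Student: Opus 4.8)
The plan is to show that the zero-space hypothesis removes the single characteristic-sensitive ingredient from the derivation of Theorem~\ref{thm: principal teorema o metodo que usa matrices binarias}, so that each resulting inequality becomes a Shannon (information) inequality. Since the rank function of subspaces over any field satisfies the polymatroid axioms, every Shannon inequality holds over all characteristics; hence the reduced inequalities are linear rank inequalities. Concretely, I would first substitute the chosen subspace equal to $O$ into both displays of the theorem and read off the inequality in the remaining variables.

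Inequality (ii) is essentially immediate. Its left-hand side is $\text{H}(C)$, and every summand on the right-hand side is a nonnegative Shannon quantity: a joint entropy $\text{H}(\cdots)\ge0$, a conditional entropy $\text{H}(\cdot\mid\cdot)\ge0$, a mutual information $\text{I}(\cdot;\cdot)\ge0$, or a $\nabla$-term, which by its definition is a finite nonnegative sum of mutual informations. If one takes $C=O$, then $\text{H}(C)=0$ and the inequality reduces to $0\le(\text{a nonnegative Shannon expression})$, valid over any field. If instead one of the $A_{e_i}$ or $B_{e_{S_i}}$ is $O$, I would check that the conditional-entropy and mutual-information terms that previously absorbed the rank-$B$ information collapse, leaving an inequality provable from monotonicity and submodularity alone.

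For inequality (i) the argument requires tracing the proof of the main theorem, because here the left-hand side carries a joint entropy that must be bounded from above, so triviality is not available. The sole step in that proof that invokes $\text{char}\,\mathbb{F}$ is the one that replaces $\text{rank}\,B$ by $m$ or by $m-1$; this is exactly the non-Shannon ingredient, and it is responsible for the coefficient $(m-1)$ and the term $(m-1)\,\text{I}(A_{[e_n]};C)$. I would show that once the designated subspace is $O$, the relevant intersection or projection has dimension $0$ independently of the characteristic, so that the rank-$(m-1)$ bound and the rank-$m$ bound coincide; the step then becomes a consequence of the polymatroid axioms rather than of the arithmetic of $t$. Re-assembling the remaining steps---all of which are characteristic-free Shannon manipulations---then yields (i) over every field.

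The main obstacle is this second direction: pinning down the unique characteristic-dependent step inside the proof of Theorem~\ref{thm: principal teorema o metodo que usa matrices binarias} and verifying that the zero-space hypothesis exactly neutralises it for (i). I expect this to reduce to confirming a small slack inequality, namely that the coefficient-$(m-1)$ and $\text{I}(A_{[e_n]};C)$ contributions, which were tuned to the rank-$(m-1)$ case, still furnish a valid upper bound once the zero subspace kills the competing terms. As the footnote indicates, this residual inequality can be verified by hand or mechanically with a checker such as Xitip, after which the conclusion follows.
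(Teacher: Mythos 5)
Your high-level plan does coincide with the paper's own argument: the only characteristic-sensitive ingredient in the proof of Theorem \ref{thm: principal teorema o metodo que usa matrices binarias} is the rank equation of Claim \ref{claim: ecuaci=0000F3n que separa rango} (which feeds Claim \ref{claim: desigualdades condicionales}), and the paper's closing Remark disposes of both corollaries by observing that a zero space makes that equation trivial, everything else being characteristic-free. However, your neutralization mechanism --- ``the relevant intersection or projection has dimension $0$, so the rank-$(m-1)$ and rank-$m$ bounds coincide'' --- covers only part of the case analysis, and the missing case is exactly where it fails. It is correct when the zero space is $C$ (then $\bar{C}\leq C$ gives $\bar{C}=O$ at once) or one of the $A_{e_{i}}$ (then $A'_{e_{i}}=O$, and the complementary-spaces condition forces $\bar{C}=O$, hence also $\hat{C}=O$), so that both branches of the rank equation read $0=0$. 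But when the zero space is one of the $B_{e_{S_{k}}}$, the derived spaces $\bar{A}_{e_{i}}$ and $\bar{C}$ are built from the $A$'s and $C$ alone and are untouched: $\text{H}(\bar{C})$ may be positive, and $\text{H}\left(\pi_{S_{j}}(\bar{C}):j\in\left[m\right]\right)$ still genuinely equals $(m-1)\text{H}(\bar{C})$ or $m\text{H}(\bar{C})$ according to the characteristic --- the two bounds do not coincide. What actually saves inequality (i) in this case is an argument your sketch does not contain: since $\bar{B}_{e_{S_{k}}}=O$, the span estimated in part a of Claim \ref{claim: desigualdades condicionales} lies inside the sum of only the $m-1$ remaining projections $\pi_{S_{j}}(\bar{C})$, each of dimension at most $\text{H}(\bar{C})$, so the bound $(m-1)\text{H}(\bar{C})$ follows from subadditivity over every field, with no rank hypothesis at all. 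Similarly, for inequality (ii) with a zero $B_{e_{S_{k}}}$, the point is that $|S_{k}|\geq2$ makes $\hat{B}_{e_{S_{k}}}=O$ force $\hat{C}=O$, whence $\text{H}(C)=\text{codim}_{C}\hat{C}$ and inequality (\ref{eq:desigualdad de codimension de C gorrito a C}) alone gives the conclusion; your ``I would check that the terms collapse'' leaves this undone.

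A second, smaller but real defect is outsourcing the residual verification to Xitip. The corollary asserts an infinite family of statements (all admissible matrices $B$ and all $n$, $m$, $t$), whereas a checker can only certify single instances in a fixed number of variables; moreover, what the corollary requires is validity for subspaces over every field (a linear rank inequality), which the construction delivers directly once the conditional inequality is made characteristic-free --- the footnote's remark about Shannon provability is an observation about the output, not a step a proof can lean on.
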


Below is shown the class of $\left\lfloor \frac{n-1}{2}\right\rfloor -2$
inequalities that are true over finite sets of primes (i.e. sets of
the form $\left\{ p:p\mid t\right\} $), and another class of $\left\lfloor \frac{n-1}{2}\right\rfloor -2$
inequalities that are true over co-finite sets of primes (i.e. sets
of the form $\left\{ p:p\nmid t\right\} $).
\begin{example}
\label{ejemplo fuerte caracterisitica rango lineal para conjunto finito}Taking
$n\geq7$ and setting $t$ integer such that $2\leq t\leq\left\lfloor \frac{n-1}{2}\right\rfloor -1$
and $M\left(n,t\right)=n-t-2$, the following inequalities are produced
using as a guide square matrices $B_{M\left(n,t\right)}^{t}$with
column vectors of the form $B_{i}:=B_{e_{\left[M\left(n,t\right)\right]-i}}=c-e_{i}$,
$A_{i}:=A_{e_{i}}=e_{i}$, with $c=\underset{j\in\left[M\left(n,t\right)\right]}{\sum}e_{j},$
as described in figure \ref{fig figura}-left side. The rank of $B_{M\left(n,t\right)}^{t}$
is $M\left(n,t\right)$ when $\text{char}\mathbb{F}$ does not divide
$t$ and is $M\left(n,t\right)-1$ in other case. We remark that in
\cite{14} we used the case $M\left(n,t\right)=t+1$, so the columns
of the matrices were only of the form presented in figure \ref{fig figura}-right
side. Let $A_{1}$, $A_{2}$, $\ldots$, $A_{M\left(n,t\right)}$,
$B_{1}$, $B_{2}$, $\ldots$, $B_{t+1}$, $C$ be subspaces of a
finite-dimensional vector space $V$ over a scalar field $\mathbb{F}$.
We have:

(a) If $\text{char}\left(\mathbb{F}\right)$ divides $t$,
\[
\mathrm{H}\left(B_{\left[t+1\right]},A_{\left[M\left(n,t\right)\right]-\left[t+1\right]}\right)+\left(t+2\right)\left(M\left(n,t\right)-t-1\right)\mathrm{H}\left(C\right)\leq\left(M\left(n,t\right)-1\right)\mathrm{I}\left(A_{\left[M\left(n,t\right)\right]};C\right)
\]
\[
+\left(t+2\right)\stackrel[i=t+2]{M\left(n,t\right)}{\sum}\mathrm{H}\left(A_{i}\right)+\left[\left(t+2\right)\left(M\left(n,t\right)-t\right)-1\right]\left(\mathrm{H}\left(C\mid A_{\left[M\left(n,t\right)\right]}\right)+\stackrel[i=1]{M\left(n,t\right)}{\sum}\mathrm{I}\left(A_{\left[M\left(n,t\right)\right]-i};C\right)\right)
\]
\[
+\stackrel[i=1]{t+1}{\sum}\left(\mathrm{H}\left(B_{i}\mid A_{\left[M\left(n,t\right)\right]-i}\right)+\mathrm{H}\left(B_{i}\mid A_{i},C\right)+\mathrm{I}\left(A_{\left[i\right]};A_{\left[t+1\right]-\left[i\right]}\right)+\text{I}\left(A_{\left[i-1\right]};A_{i}\right)\right).
\]
(b) If $\text{char}\left(\mathbb{F}\right)$ does not divide $t$,
\[
\mathrm{H}\left(C\right)\leq\frac{1}{M\left(n,t\right)}\mathrm{H}\left(B_{\left[t+1\right]},A_{\left[M\left(n,t\right)\right]-\left[t+1\right]}\right)+\mathrm{H}\left(C\mid A_{\left[M\left(n,t\right)\right]}\right)+\stackrel[i=1]{M\left(n,t\right)}{\sum}\mathrm{I}\left(A_{\left[M\left(n,t\right)\right]-i};C\right)
\]
\[
+\stackrel[i=2]{t+1}{\sum}\mathrm{I}\left(A_{\left[i-1\right]};A_{i}\right)+\stackrel[i=1]{t+1}{\sum}\left(\mathrm{H}\left(C\mid A_{i},B_{i}\right)+\mathrm{H}\left(B_{i}\mid A_{\left[M\left(n,t\right)\right]-i}\right)+\mathrm{I}\left(A_{\left[i\right]};A_{\left[M\left(n,t\right)\right]-\left[i\right]}\right)\right).
\]
\end{example}

Corollary \ref{cor:corolario sobre eliminai=0000F3n de variable para que la desi sea rango lineal}
shows that each inequality, presented in example \ref{ejemplo fuerte caracterisitica rango lineal para conjunto finito},
can not be deduced from a higher order inequality by nullifying some
variables. In fact, using Corollary \ref{cor:corolario sobre cota de la dimension para que las desigualdades sean validas ind de caract},
we can say more about the class (a) of these inequalities.

For $m\in\mathbb{N}$ and $p$ prime, the function that counts all
the powers of $p$ less than or equal to $m$ is denoted by $\varphi\left(m,p\right)$.
In example \ref{ejemplo fuerte caracterisitica rango lineal para conjunto finito},
$\varphi\left(\left\lfloor \frac{n-1}{2}\right\rfloor -2,p\right)$
inequalities in $n$ variables, which are true over fields whose characteristic
is $p$, are produced. By Corollary \ref{cor:corolario sobre cota de la dimension para que las desigualdades sean validas ind de caract},
each of these inequalities holds over any characteristic when the
dimension of $V$ is at most $n-t-3$. Also, each inequality is determined
by $t$ and this number can run the powers of $p$ less than or equal
to $\left\lfloor \frac{n-1}{2}\right\rfloor -2$. This means that
each inequality is true in at least one vector space where the other
inequalities are not true. Therefore, any of these inequalities can
not be deduced from the other inequalities, much less if they are
combined with linear rank inequalities, without violating this property.
We have the next corollary.
\begin{cor}
For each $n\geq7$ and $p$ prime. There are at least $\varphi\left(\left\lfloor \frac{n-1}{2}\right\rfloor -2,p\right)$
independent inequalities in $n$ variables which are characteristic-dependent
linear rank inequalities that are true over fields whose characteristic
is $p$.
\end{cor}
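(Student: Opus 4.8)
The plan is to isolate, among the inequalities of Example~\ref{ejemplo fuerte caracterisitica rango lineal para conjunto finito}(a), the subfamily indexed by the powers of $p$, to count it, and then to separate its members by exhibiting, for each one, a genuine subspace configuration that witnesses its non-redundancy. First I would fix $n\ge 7$ and the prime $p$, and set $m=\left\lfloor\frac{n-1}{2}\right\rfloor-2$. For every integer $t$ with $2\le t\le\left\lfloor\frac{n-1}{2}\right\rfloor-1$, Example~\ref{ejemplo fuerte caracterisitica rango lineal para conjunto finito}(a) produces a characteristic-dependent linear rank inequality $\iota_{t}$ in $n$ variables that holds over every field whose characteristic divides $t$. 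Restricting to $t=p^{k}$ with $p^{k}\le m$, each such value lies in the admissible range and satisfies $p\mid p^{k}$, so $\iota_{p^{k}}$ holds over every field of characteristic $p$. The number of admissible exponents is exactly the number of powers of $p$ not exceeding $m$, that is $\varphi\!\left(\left\lfloor\frac{n-1}{2}\right\rfloor-2,\,p\right)$; write $t_{1}=p<t_{2}=p^{2}<\cdots<t_{r}=p^{r}$ for these values, with $r=\varphi(m,p)$. These $r$ inequalities are precisely the ones the corollary counts.

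Next I would invoke the standard criterion for non-deducibility: an inequality is a consequence of a family of inequalities together with the (universally valid) linear rank inequalities if and only if every realizable entropy vector satisfying the family also satisfies it. Hence, to show that $\iota_{t_{k}}$ cannot be deduced from the remaining $\iota_{t_{j}}$ ($j\ne k$), it suffices to produce a single subspace configuration, over some finite field, whose entropy vector satisfies all $\iota_{t_{j}}$ with $j\ne k$ but violates $\iota_{t_{k}}$. Two ingredients from the excerpt drive this. By Corollary~\ref{cor:corolario sobre cota de la dimension para que las desigualdades sean validas ind de caract}, applied with the example parameter $M(n,t)=n-t-2$, the inequality $\iota_{t}$ holds over an arbitrary field as soon as $\dim V\le n-t-3$. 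On the other hand, the explicit configuration of the remark following Theorem~\ref{thm: principal teorema o metodo que usa matrices binarias}, namely $A_{e_{i}}=\langle e_{i}\rangle$, $B_{e_{S_{j}}}=\langle e_{S_{j}}\rangle$, $C=\langle\sum e_{i}\rangle$ inside $\mathrm{GF}(q)^{M(n,t)}$, violates $\iota_{t}$ whenever $q\nmid t$.

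For the target index $k$ I would take $v_{k}$ to be the entropy vector of this explicit configuration for the parameter $t_{k}=p^{k}$, realized over $\mathrm{GF}(q)$ for a prime $q\ne p$; since $q\nmid p^{k}$ this gives $\iota_{t_{k}}(v_{k})<0$. For every smaller index $j<k$ one has $t_{j}<t_{k}$, so $\dim V=M(n,t_{k})=n-t_{k}-2\le n-t_{j}-3$, and by the dimension corollary every configuration in this dimension, $v_{k}$ in particular, satisfies $\iota_{t_{j}}$. This already displays each $\iota_{t_{k}}$ as strictly stronger than all lower-indexed ones, a staircase that forces the $r$ functionals to be pairwise non-redundant and thus certifies at least $r=\varphi(m,p)$ distinct inequalities valid over characteristic $p$.

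The main obstacle is the opposite direction. The characteristic and dimension tools dispatch the cases $j<k$, but for $j>k$ neither $q\mid p^{j}$ (as $q\ne p$) nor the dimension bound $\dim V\le n-t_{j}-3$ holds, and one checks that no single prime can divide every $t_{j}=p^{j}$ with $j\ne k$ while missing $t_{k}=p^{k}$; so $\iota_{t_{j}}(v_{k})\ge 0$ cannot be obtained for free. To upgrade the staircase to full independence, I would evaluate $\iota_{t_{j}}$ directly on the configuration $v_{k}$: since the $A$'s are coordinate lines, $C$ the all-ones line, and the $B$'s the all-ones-minus-a-coordinate lines, every subset rank occurring in $\iota_{t_{j}}$ is an explicit small integer, and substituting these values should collapse most conditional-entropy and mutual-information terms and reduce $\iota_{t_{j}}(v_{k})\ge 0$ to a transparent numerical inequality. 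Carrying out this bookkeeping for all $j>k$ is the one genuinely computational step; once it is in place, the vectors $v_{1},\dots,v_{r}$ certify that none of the $r$ inequalities can be deduced from the others, even in combination with linear rank inequalities, which is exactly the assertion of the corollary.
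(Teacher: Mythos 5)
Your proposal follows essentially the same route as the paper, and is in fact more candid about what that route delivers. The paper's entire argument for this corollary is the paragraph preceding it: restrict $t$ to the powers of $p$ below $\left\lfloor \frac{n-1}{2}\right\rfloor -2$ (giving the count $\varphi$), invoke Corollary \ref{cor:corolario sobre cota de la dimension para que las desigualdades sean validas ind de caract} to get validity of each inequality in every dimension at most $n-t-3$, and use the explicit configuration from the remark after Theorem \ref{thm: principal teorema o metodo que usa matrices binarias} (coordinate lines, all-ones-minus-coordinate lines, all-ones line, over a field of characteristic $q\neq p$) to violate the inequality with parameter $t$ in dimension $n-t-2$; from this the paper asserts that ``each inequality is true in at least one vector space where the other inequalities are not true'' and declares independence. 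That is exactly your staircase: the counterexample for parameter $t_{k}$ has ambient dimension $n-t_{k}-2\leq n-t_{j}-3$ for every $j<k$, hence satisfies all lower-indexed inequalities while violating the $k$-th one.

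The obstacle you flag for $j>k$ is genuine, and the paper does not resolve it either. For the configuration violating $\iota_{t_{k}}$, neither tool applies to the higher-indexed inequalities: the ambient dimension $n-t_{k}-2$ exceeds $n-t_{j}-3$, and no prime divides every $p^{j}$ with $j\neq k$ while avoiding $p^{k}$. Indeed, the paper's literal claim (a space where one inequality is true and \emph{all} the others fail) can be realized by its tools only for the smallest power $t=p$, since for larger $t$ the counterexamples of the smaller powers do not fit inside any space where the dimension bound guarantees $\iota_{t}$. So what is actually established---by you and by the paper alike---is that each $\iota_{t_{k}}$ is not a consequence of the inequalities with smaller parameter (even jointly, and even combined with linear rank inequalities), not that each is independent of all the others simultaneously. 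Your proposed remedy, directly evaluating $\iota_{t_{j}}$ on the explicit rank-one configuration for $j>k$, is the natural way to close this, but you leave it as unfinished bookkeeping and the paper omits it entirely. In short: same approach, same gap; your write-up has the merit of naming it.
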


\subsection{\label{subsec:secci=0000F3n de prueba del teorema principal}Proof
of Theorem \ref{thm: principal teorema o metodo que usa matrices binarias}:}

In a general way, we show how to build characteristic-dependent linear
rank inequalities from dependency relations in certain type of binary
matrices. We show this in three steps:

\textbf{A. Finding an equation.}

\textbf{B. Conditional characteristic-dependent linear rank inequalities.}

\textbf{C. Characteristic-dependent linear rank inequalities.}

First to all, we show how to abstract an equation as presented in
\cite[Lemma 3]{14}. Second, how to define ``conditional-linear rank
inequalities'' as presented in \cite[Lemma 5 and 6]{14}. Third,
the technique of upper bounds used in \cite[for a particular case]{4}
 and improved in \cite[for a family of binary matrices]{14} is applied.

\textbf{A. Finding an equation:} Let $\mathbb{F}^{n}=\left\langle e_{1}\right\rangle \oplus\cdots\oplus\left\langle e_{n}\right\rangle $
and $c=e_{1}+\cdots+e_{n}$. Let $B=\left(B^{i}\right)=\left(e_{S_{i}}\right)$
be a $n\times m$ binary matrix over $\mathbb{F}$, $m\leq n$. We
make the following correspondence between the columns of $B$ and
the canonical projection functions on $\mathbb{F}^{n}$: 
\[
e_{S_{i}}\dashleftarrow\dashrightarrow\pi_{S_{i}}\text{ \,\, where \ensuremath{S_{i}}=\ensuremath{\left\{  j:B_{\left(j,i\right)}=1\right\} } .}
\]

We suppose that $\text{rank}B=m$ if $\text{char}\mathbb{F}$ does
not divide $t$, and the $\text{rank}B=m-1$ if $\text{char}\mathbb{F}$
divides $t$, for $t\geq2$. Having account the previous correspondence,
we can define the following propositions whose proof is omitted:
\[
e_{S_{m}}=\stackrel[i=1]{m-1}{\sum}\alpha_{i}e_{S_{i}}\text{ if and only if }\pi_{S_{m}}=\stackrel[i=1]{m-1}{\sum}\alpha_{i}\pi_{S^{i}}
\]
\[
\left\{ e_{S_{i}}\right\} _{i=1}^{r}\text{is an independent set if and only if }\stackrel[i=1]{r}{\sum}\pi_{S_{i}}\left(\left\langle c\right\rangle \right)\text{ is a direct sum}
\]
We get an equation of the form: $\mathrm{H}\left(\pi_{S_{j}}\left(\left\langle c\right\rangle \right):j\in\left[m\right]\right)=\left\{ \begin{array}{c}
\left(m-1\right)\mathrm{H}\left(\left\langle c\right\rangle \right)\,\,\,\,\,\,\,\,\,\,\,\,\,\text{ if }\text{char}\left(\mathbb{F}\right)\mid t\\
m\mathrm{H}\left(\left\langle c\right\rangle \right)\text{ if }\text{char}\left(\mathbb{F}\right)\nmid t\text{.}
\end{array}\right.$

Previous argument can be easily generalized to vector subspaces $A_{e_{1}}$,
$...$, $A_{e_{n}}$, $C$ of a vector space $V$ over a field $\mathbb{F}$,
where $A_{e_{1}}$, $A_{e_{n}}$, $...$, $A_{e_{n}}$ are mutually
complementary in $V$ and $C$ is such that the sum of $\underset{i\neq k}{\bigoplus}A_{e_{i}}$
and $C$ is a direct sum for all $k$, such a collection of spaces
is called tuple that satisfies condition of complementary vector spaces.
Formally, we have:
\begin{claim}
\label{claim: ecuaci=0000F3n que separa rango}When $B$ exists, a
tuple that satisfies condition of complementary vector spaces holds
\[
\mathrm{H}\left(\pi_{S_{j}}\left(C\right):j\in\left[m\right]\right)=\left\{ \begin{array}{c}
\left(m-1\right)\mathrm{H}\left(C\right)\,\,\,\,\,\,\,\,\,\,\,\,\,\text{ if }\text{char}\left(\mathbb{F}\right)\mid t\\
m\mathrm{H}\left(C\right)\text{ if }\text{char}\left(\mathbb{F}\right)\nmid t\text{.}
\end{array}\right.
\]
\end{claim}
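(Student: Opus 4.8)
The plan is to lift the linear-algebra statement established for the canonical setup $\mathbb{F}^n=\langle e_1\rangle\oplus\cdots\oplus\langle e_n\rangle$ with $C=\langle c\rangle$ to an arbitrary tuple satisfying the condition of complementary vector spaces. The key observation is that the entire preceding argument only used two structural facts: that the $A_{e_i}$ are mutually complementary (so that the projections $\pi_{S_i}$ are well defined on the ambient space) and that $\langle c\rangle$ meets each $\bigoplus_{i\neq k}A_{e_i}$ trivially (so that the projections restricted to $\langle c\rangle$ behave like projections restricted to the diagonal line). First I would record that, because $V=A_{e_1}\oplus\cdots\oplus A_{e_n}$, the projection $\pi_{S_i}\colon V\twoheadrightarrow\bigoplus_{j\in S_i}A_{e_j}$ is defined exactly as in the canonical case, and that the abstract dependency relations among the $e_{S_i}$ translate verbatim into relations among the $\pi_{S_i}$, since these relations are purely a statement about which columns of the binary matrix $B$ are linearly dependent over $\mathbb{F}$.

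Next I would make precise the correspondence between $\dim\sum_{i}\pi_{S_i}(C)$ and the rank of $B$. The content of Claim~\ref{claim: ecuaci=0000F3n que separa rango} is that, for the specific line $C$, the dimension of the span of the images $\pi_{S_j}(C)$ equals $(\operatorname{rank}B)\cdot\mathrm{H}(C)$; I would split into the two rank cases governed by whether $\operatorname{char}\mathbb{F}$ divides $t$. The crucial step is the second displayed biconditional from part A, generalized to $C$: a subfamily $\{e_{S_i}\}$ is independent over $\mathbb{F}$ if and only if the corresponding sum $\sum_i\pi_{S_i}(C)$ is direct. For the canonical $C=\langle c\rangle$ this was asserted; the job here is to see that it persists for an abstract $C$ meeting each $\bigoplus_{i\neq k}A_{e_i}$ trivially, because that hypothesis is precisely what guarantees $C$ behaves like the diagonal — any nontrivial dependence among the $\pi_{S_i}(C)$ would force a vector of $C$ into a proper coordinate sum, contradicting the complementarity condition.

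Concretely, I would argue as follows. Pick a generator $v$ of $C$ (treating $\dim C=1$ first, then scaling by $\mathrm{H}(C)$ to handle the general case additively). Each $\pi_{S_i}(v)$ is the ``$S_i$-part'' of $v$ relative to the decomposition $V=\bigoplus A_{e_j}$, so a linear relation $\sum_i\alpha_i\pi_{S_i}(v)=0$ collects, component by component, into relations weighted exactly by the entries of the columns $e_{S_i}$; the complementarity condition on $C$ forces each such component relation to be the genuine scalar relation $\sum_i\alpha_i (e_{S_i})=0$ among the columns of $B$. Hence the maximal independent subfamily of images has size $\operatorname{rank}B$, which is $m$ or $m-1$ according to the characteristic, and summing over a basis of $C$ multiplies dimensions by $\mathrm{H}(C)$, giving the claimed equation.

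The main obstacle I anticipate is the faithful transfer of the independence biconditional from the single distinguished line $\langle c\rangle$ to a general $C$: one must verify that the triviality of $C\cap\bigoplus_{i\neq k}A_{e_i}$ for every $k$ is strong enough to reproduce, inside $C$, the same dependency lattice that the vector $c$ exhibits in $\mathbb{F}^n$. This is where the hypothesis must be used with care, since for a poorly positioned $C$ the images $\pi_{S_i}(C)$ could collapse and spuriously lower the dimension; showing this cannot happen under the stated complementarity condition is the heart of the claim. Once that equivalence is secured, the rest is the routine bookkeeping of replacing $\mathrm{H}(\langle c\rangle)=1$ by $\mathrm{H}(C)$ and invoking the given rank of $B$ in each characteristic case.
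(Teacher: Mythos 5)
Your proposal is correct and takes essentially the same route as the paper: the paper's own treatment consists of the column--projection correspondence $e_{S_i}\dashleftarrow\dashrightarrow\pi_{S_i}$ for the canonical tuple (with the two biconditionals stated but their proofs omitted) plus the assertion that this "can be easily generalized" to any tuple satisfying the complementarity condition, and your component-wise argument---using that each $\pi_{\{k\}}$ is injective on $C$ because $C\cap\bigoplus_{i\neq k}A_{e_i}=O$, so a relation $\sum_i\alpha_i\pi_{S_i}(v)=0$ forces $B\alpha=0$---is exactly that missing generalization, yielding $\dim\sum_j\pi_{S_j}(C)=\text{rank}(B)\cdot\mathrm{H}(C)$ in both characteristic cases. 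The one step you leave implicit (that for $\dim C>1$ the relation space splits per basis vector of $C$, with no cross-relations, so dimensions indeed scale by $\mathrm{H}(C)$) follows from the same injectivity mechanism applied to $\sum_{j,\ell}\alpha_{j\ell}\pi_{S_j}(c_\ell)=0$, so it is routine bookkeeping as you say.
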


\begin{figure}[H]
\[
\begin{array}{c}
B_{1}\cdots B_{t+1}A_{t+2}\cdots A_{M\left(n,t\right)}\\
\left(\begin{array}{cccccc}
0 & \cdots & 1 & 0 & \cdots & 0\\
1 & \vdots & 1 & 0 & \vdots & 0\\
\vdots & \vdots & \vdots & \vdots & \vdots & 0\\
1 & \vdots & 1 & 0 & \vdots & 0\\
1 & \vdots & 0 & 0 & \vdots & \vdots\\
1 & \vdots & 1 & 1 & \vdots & 0\\
1 & \vdots & \vdots & 0 & \vdots & 0\\
1 & \vdots & 1 & \vdots & \vdots & 0\\
1 & \cdots & 1 & 0 & \cdots & 1
\end{array}\right)
\end{array}\,\,\,\,\,\,\,\,\begin{array}{c}
B_{1}\cdots B_{n+1}\\
\left(\begin{array}{ccc}
0 & \cdots & 1\\
1 & \vdots & 1\\
\vdots & \vdots & \vdots\\
1 & \vdots & 1\\
1 & \vdots & 0
\end{array}\right)
\end{array}
\]

\caption{\label{fig figura}Matrix $D_{M\left(n,t\right)}^{t}$ and matrix
$D_{M\left(2n+3,n\right)}^{n}$ used in \cite{14}.}
\end{figure}

\textbf{B. Conditional characteristic-dependent linear rank inequalities:}
In the previous step we noticed that dependence relations of $B$
can be expressed using projections of a suitable space $C$. In fact,
we can derive more properties as follows, from $e_{S_{i}}=\underset{j\in S_{i}}{\sum}e_{j}=c-\underset{j\notin S_{i}}{\sum}e_{j}$,
we derive
\[
\pi_{S_{i}}\left(C\right)=\underset{e_{j}\in S_{i}}{\bigoplus}A_{e_{j}}\cap\left(C\oplus\underset{e_{j}\notin S_{i}}{\bigoplus}A_{e_{j}}\right).
\]
This equality is easily proven. The following claim use this to find
inequalities that depend on the characteristic of $\mathbb{F}$, and
the involved spaces have some dependency relationships expressed by
$B$. We denote by $\mathcal{B}'=\left\{ e_{S_{i}}:1<\left|S_{i}\right|<n\right\} $;
$\mathcal{B}''=\left\{ e_{S_{i}}:\left|S_{i}\right|=1\right\} $;
$\mathcal{B}'''=\left\{ C\right\} $ if there exists $e_{S_{i}}$
in $B$ such that $\left|S_{i}\right|=n$, and $\mathcal{B}'''$ is
empty in other case.
\begin{claim}
\label{claim: desigualdades condicionales}For a tuple of vector subspaces
$\left(A_{e_{i}},B_{e_{S_{j}}},C:e_{S_{j}}\in\mathcal{B}',e_{i}\in\mathcal{B}''\right)$
such that $\left(A_{e_{1}},\ldots,A_{e_{n}},C\right)$ satisfies the
condition of complementary vector subspaces, consider the following
conditions:

(i) $A_{e_{i}}\leq A_{\left[e_{n}\right]-e_{i}}\oplus C$ for $i$
such that $e_{i}\in\mathcal{B}''$. ~~~~~~~~~(ii) $B_{e_{S_{i}}}\leq\underset{j\in S_{i}}{\bigoplus}A_{e_{j}}$
for $e_{S_{i}}\in\mathcal{B}'$.

(iii) $B_{e_{S_{i}}}\leq\underset{j\notin S_{i}}{\bigoplus}A_{e_{j}}\oplus C$
for $e_{S_{i}}\in\mathcal{B}'$.~~~~~~~~~~~~~~~~~~~~~~(iv)
$C\leq\underset{j\notin S_{i}}{\bigoplus}A_{e_{j}}+B_{e_{S_{i}}}$
for $e_{S_{i}}\in\mathcal{B}'$.

We have that

a. If condition (i), (ii) and (iii) hold over a fields whose characteristic
divides $t$, then 
\[
\text{H}\left(A_{e_{i}},B_{e_{S_{j}}},C:e_{S_{j}}\in\mathcal{B}',e_{i}\in\mathcal{B}'',C\in\mathcal{B}'''\right)\leq\left(m-1\right)\text{H}\left(C\right).
\]

b. If condition (ii) and (iv) hold over a fields whose characteristic
does not divide $t$, then 
\[
m\mathrm{H}\left(C\right)\leq\text{H}\left(A_{e_{i}},B_{e_{S_{j}}},C:e_{S_{j}}\in\mathcal{B}',e_{i}\in\mathcal{B}'',C\in\mathcal{B}'''\right).
\]
\end{claim}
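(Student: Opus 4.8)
The plan is to reduce both inequalities to a single pair of subspace inclusions and then invoke the rank-separating equation of the previous Claim, which identifies $\mathrm{H}(\pi_{S_j}(C):j\in[m])=\dim\sum_{j\in[m]}\pi_{S_j}(C)$ with $(m-1)\mathrm{H}(C)$ when $\mathrm{char}(\mathbb{F})\mid t$ and with $m\mathrm{H}(C)$ when $\mathrm{char}(\mathbb{F})\nmid t$. Throughout I use that $\mathrm{H}$ of a collection of subspaces is the dimension of their span, so that $U\leq W$ implies $\mathrm{H}(U)\leq\mathrm{H}(W)$, together with the explicit description $\pi_{S_i}(C)=\bigoplus_{j\in S_i}A_{e_j}\cap\bigl(C\oplus\bigoplus_{j\notin S_i}A_{e_j}\bigr)$ derived just above. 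The whole argument is then a comparison, column by column of $B$, between the generators $A_{e_i},B_{e_{S_j}},C$ and the projections $\pi_{S_j}(C)$, splitting the $m$ columns according to whether $|S_j|=1$ (type $\mathcal{B}''$), $1<|S_j|<n$ (type $\mathcal{B}'$), or $|S_j|=n$ (the all-ones column, giving $\mathcal{B}'''$).

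For part (a) I would prove the inclusion $\mathrm{span}(A_{e_i},B_{e_{S_j}},C:\dots)\leq\sum_{j\in[m]}\pi_{S_j}(C)$, which by the rank-separating equation immediately yields the bound by $(m-1)\mathrm{H}(C)$. Checking this inclusion generator by generator: for a column with $S_j=\{i\}$, condition (i) gives $A_{e_i}\leq C\oplus\bigoplus_{k\neq i}A_{e_k}$, so the intersection in the projection formula collapses to $\pi_{S_j}(C)=A_{e_i}$; for a column $e_{S_j}\in\mathcal{B}'$, conditions (ii) and (iii) say exactly that $B_{e_{S_j}}$ lies in both factors $\bigoplus_{k\in S_j}A_{e_k}$ and $C\oplus\bigoplus_{k\notin S_j}A_{e_k}$, hence $B_{e_{S_j}}\leq\pi_{S_j}(C)$; and for the all-ones column the two factors are $V$ and $C$, so $\pi_{S_j}(C)=C$, matching the generator singled out by $\mathcal{B}'''$. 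Summing these containments gives the required inclusion.

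For part (b) the roles reverse: I would prove $\sum_{j\in[m]}\pi_{S_j}(C)\leq\mathrm{span}(A_{e_i},B_{e_{S_j}},C:\dots)$, so that the rank-separating equation forces $m\mathrm{H}(C)\leq\mathrm{H}(A_{e_i},B_{e_{S_j}},C:\dots)$. For $|S_j|=1$ the projection is automatically contained in $A_{e_i}$, and for the all-ones column it equals $C$; the substantive case is $e_{S_j}\in\mathcal{B}'$, where I claim $\pi_{S_j}(C)\leq B_{e_{S_j}}$. Using condition (iv) to enlarge the second factor to $\bigoplus_{k\notin S_j}A_{e_k}+B_{e_{S_j}}$ and condition (ii) to keep $B_{e_{S_j}}\leq\bigoplus_{k\in S_j}A_{e_k}$, any $v\in\pi_{S_j}(C)$ can be written $v=w+b$ with $w\in\bigoplus_{k\notin S_j}A_{e_k}$ and $b\in B_{e_{S_j}}$; since $v,b\in\bigoplus_{k\in S_j}A_{e_k}$, the vector $w=v-b$ lies in $\bigoplus_{k\in S_j}A_{e_k}\cap\bigoplus_{k\notin S_j}A_{e_k}$, which is $O$ by mutual complementarity of the $A_{e_k}$, forcing $v=b\in B_{e_{S_j}}$.

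The routine parts are the collapses of the intersection for the single-element and all-ones columns; the one genuinely delicate step is the last one in part (b), the identity $\bigoplus_{k\in S_j}A_{e_k}\cap\bigl(\bigoplus_{k\notin S_j}A_{e_k}+B_{e_{S_j}}\bigr)=B_{e_{S_j}}$, where the cross term must be killed using that $A_{e_1},\dots,A_{e_n}$ form a direct sum. I expect this direct-sum cancellation, together with the bookkeeping that matches each column type to the correct generator so that exactly the members of $\mathcal{B}',\mathcal{B}'',\mathcal{B}'''$ appear on the span side, to be the main obstacle; everything else follows from monotonicity of dimension under inclusion and the rank-separating equation.
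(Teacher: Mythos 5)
Your proof is correct and is essentially the argument the paper intends: the paper states this claim with proof omitted, but the immediately preceding material (the identity $\pi_{S_{i}}\left(C\right)=\underset{j\in S_{i}}{\bigoplus}A_{e_{j}}\cap\bigl(C\oplus\underset{j\notin S_{i}}{\bigoplus}A_{e_{j}}\bigr)$ together with the rank-separating equation of Claim \ref{claim: ecuaci=0000F3n que separa rango}) is exactly the reduction you perform. Your column-by-column verification, including the direct-sum cancellation showing $\pi_{S_{j}}\left(C\right)\leq B_{e_{S_{j}}}$ under (ii) and (iv) in part (b), correctly supplies the details the paper leaves implicit.
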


\textbf{C. Characteristic-dependent linear rank inequalities:} We
find vector subspaces that satisfy conditions of previous claim. Let
$\left(A_{e_{i}},B_{e_{S_{j}}},C:e_{S_{j}}\in\mathcal{B}',e_{i}\in\mathcal{B}''\right)$
a tuple of arbitrary vector subspaces of a finite dimensional vector
space $V$ over a finite field $\mathbb{F}$.

From $A_{e_{1}}$, $\ldots$, $A_{e_{n}}$, and $C$, we obtain a
tuple that satisfies condition of complementary spaces $\left(A_{e_{1}}',\ldots,A_{e_{n}}',\bar{C}\right)$
as obtained in \cite{14} which holds:
\begin{equation}
\text{codim}_{A_{e_{k}}}\left(A_{e_{k}}'\right)=\text{I}\left(A_{\left[e_{k-1}\right]};A_{e_{k}}\right)\text{, for all \ensuremath{k},}\label{ecuaciones de codimension de los Ai primas-1}
\end{equation}
\begin{equation}
\text{codim}_{C}\left(\bar{C}\right)\leq\nabla\left(C\right)\overset{\text{def.}}{=}\text{H}\left(C\mid A_{\left[e_{n}\right]}\right)+\underset{e_{i}\in\left[e_{n}\right]}{\sum}\text{I}\left(A_{\left[e_{n}\right]-e_{i}};C\right)\text{.}\label{eq: codimension de C (raya horizontal arriba)-1}
\end{equation}
Additionally, for $T\subseteq\left[e_{n}\right]$, we can take some
elements $e_{k_{1}},\ldots,e_{k_{l}}$ with $k_{1}\leq k_{2}\leq\cdots\leq k_{l}$
such that it is possible to built a partition in intervals $\left[e_{k_{i}},e_{k_{j}}\right]$,
with maximum length, of $T$. So,
\begin{equation}
\text{codim}_{\underset{e\in T}{\sum}A_{e}}\underset{e\in T}{\bigoplus}A_{e}'\leq\nabla\left(A_{e}:e\in T\right)\overset{\text{def.}}{=}\text{I}\left(A_{\left[e_{1},e_{k_{1}}\right)};A_{\left[e_{k_{1}},e_{k_{2}}\right]}\right)+\cdots+\text{I}\left(A_{\left[e_{1},e_{k_{1}-1}\right)};A_{\left[e_{k_{1-1}},e_{k_{l}}\right]}\right).\label{eq:desigualdad de triangulo}
\end{equation}
Before continuing, we need the following three statements:
\begin{claim}
\label{claim: tupla que cumle (i)}Tuple $\left(\bar{A}_{e_{1}},\ldots,\bar{A}_{e_{n}}\bar{C}\right)$
defined by
\[
\bar{A}_{e_{k}}:=A_{e_{k}}'\cap\left(\bar{C}+\underset{e_{i}\notin D''}{\bigoplus}A_{e_{i}}'+\underset{e_{i}\in D'',i<k}{\bigoplus}\bar{A}_{e_{i}}+\underset{e_{i}\in D'',i>k}{\bigoplus}A_{e_{i}}'\right),\text{ for \ensuremath{e_{k}\in\mathcal{B}''}}
\]
\[
\bar{A}_{e_{k}}:=A_{e_{k}}',\text{ for \ensuremath{e_{k}\notin\mathcal{B}''}}
\]
satisfies condition of complementary spaces, condition (i), $\text{H}\left(\bar{A}_{e_{k}}\right)=\text{H}\left(\bar{C}\right)\leq1$
and
\begin{equation}
\text{codim}_{A_{e_{k}}}\left(\bar{A}_{e_{k}}\right)\leq\text{H}\left(A_{e_{k}}\right)-\text{H}\left(C\right)+\nabla\left(C\right),\text{ for \ensuremath{e_{k}\in\mathcal{B}''}}.\label{eq: desigualdad de codimension de A raya y A para B''}
\end{equation}
\end{claim}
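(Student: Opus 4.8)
The plan is to fix $V'=\bigoplus_{i}A_{e_{i}}'$ with the attached coordinate projections $\pi_{k}\colon V'\to A_{e_{k}}'$, to enumerate $\mathcal{B}''=\{k_{1}<\cdots<k_{r}\}$, and to write the defining intersection as $\bar{A}_{e_{k}}=A_{e_{k}}'\cap M_{k}$ with $M_{k}=\bar{C}+N_{k}$, where $N_{k}\leq\bigoplus_{i\neq k}A_{e_{i}}'$ collects all the remaining summands; in particular $A_{e_{k}}'\cap N_{k}=O$ by complementarity of the $A_{e_{i}}'$. Several assertions are immediate and I would dispatch them first: since $\bar{A}_{e_{k}}\leq A_{e_{k}}'$ for all $k$, directness of $\bar{A}_{e_{1}},\ldots,\bar{A}_{e_{n}}$ is inherited from that of the $A_{e_{i}}'$, and likewise $\bar{C}\cap\bigoplus_{i\neq k}\bar{A}_{e_{i}}\leq\bar{C}\cap\bigoplus_{i\neq k}A_{e_{i}}'=O$ for every $k$. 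Thus the only genuine content left for the complementary condition is $\bar{C}\leq\bigoplus_{i}\bar{A}_{e_{i}}$, and the remaining tasks are the dimension equality and condition (i).

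The key device is an inductive invariant that neutralizes the deliberate asymmetry of the definition, namely that the summands with index $i>k$ use $A_{e_{i}}'$ instead of $\bar{A}_{e_{i}}$. Put $P_{j}=\bigoplus_{i\notin\mathcal{B}''}A_{e_{i}}'+\bigoplus_{i\in\mathcal{B}'',\,i\leq k_{j}}\bar{A}_{e_{i}}+\bigoplus_{i\in\mathcal{B}'',\,i>k_{j}}A_{e_{i}}'$, so that $P_{0}=V'$ and $P_{r}=\bigoplus_{i}\bar{A}_{e_{i}}$. I would prove $\bar{C}\leq P_{j}$ by induction on $j$: one checks $P_{j-1}=N_{k_{j}}\oplus A_{e_{k_{j}}}'$, so any $y\in\bar{C}\leq P_{j-1}$ splits as $y=n+a$ with $n\in N_{k_{j}}$ and $a\in A_{e_{k_{j}}}'$; then $a=y-n\in\bar{C}+N_{k_{j}}=M_{k_{j}}$, whence $a\in A_{e_{k_{j}}}'\cap M_{k_{j}}=\bar{A}_{e_{k_{j}}}$ and $y\in N_{k_{j}}+\bar{A}_{e_{k_{j}}}=P_{j}$. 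Taking $j=r$ yields $\bar{C}\leq\bigoplus_{i}\bar{A}_{e_{i}}$, completing the complementary condition.

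The same decomposition gives the projection characterization $\bar{A}_{e_{k}}=\pi_{k}(\bar{C})$ for each $k\in\mathcal{B}''$: it shows $\pi_{k}(y)=a\in\bar{A}_{e_{k}}$, so $\pi_{k}(\bar{C})\leq\bar{A}_{e_{k}}$, while $\ker(\pi_{k}|_{\bar{C}})=\bar{C}\cap\bigoplus_{i\neq k}A_{e_{i}}'=O$ forces $\dim\pi_{k}(\bar{C})=\dim\bar{C}$; conversely $\bar{A}_{e_{k}}\cap N_{k}\leq A_{e_{k}}'\cap N_{k}=O$ together with $\bar{A}_{e_{k}}\leq\bar{C}+N_{k}$ gives $\dim\bar{A}_{e_{k}}\leq\dim\bar{C}$. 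The two estimates force $\bar{A}_{e_{k}}=\pi_{k}(\bar{C})$ and $\text{H}(\bar{A}_{e_{k}})=\text{H}(\bar{C})$, the common value being $\leq 1$ because $\bar{C}\leq C$ and $C$ is taken one-dimensional in the situation at hand. Condition (i) then follows formally: $\pi_{i}(\bar{C})\leq\bar{A}_{e_{i}}$ holds for all $i$ (trivially when $i\notin\mathcal{B}''$), so for $y\in\bar{C}$ one has $\pi_{k}(y)=y-\sum_{i\neq k}\pi_{i}(y)\in\bar{C}+\bigoplus_{i\neq k}\bar{A}_{e_{i}}$, i.e. $\bar{A}_{e_{k}}=\pi_{k}(\bar{C})\leq\bar{A}_{\left[e_{n}\right]-e_{k}}\oplus\bar{C}$.

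Finally, the codimension bound (\ref{eq: desigualdad de codimension de A raya y A para B''}) is a one-line consequence of the dimension equality: $\text{codim}_{A_{e_{k}}}(\bar{A}_{e_{k}})=\text{H}(A_{e_{k}})-\text{H}(\bar{A}_{e_{k}})=\text{H}(A_{e_{k}})-\text{H}(\bar{C})$, and substituting $\text{H}(\bar{C})\geq\text{H}(C)-\nabla(C)$ from (\ref{eq: codimension de C (raya horizontal arriba)-1}) produces exactly $\text{H}(A_{e_{k}})-\text{H}(C)+\nabla(C)$. The step I expect to be the main obstacle is precisely the invariant $\bar{C}\leq P_{j}$ and the resulting identity $\bar{A}_{e_{k}}=\pi_{k}(\bar{C})$: the definition of $\bar{A}_{e_{k}}$ is recursive and asymmetric in the index ordering, and it is only through this bookkeeping that the higher-index $A_{e_{i}}'$ can be legitimately traded for $\bar{A}_{e_{i}}$, which is the hidden ingredient on which both the dimension equality and condition (i) rest.
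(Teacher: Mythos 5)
Your proof follows essentially the same route as the paper's: your inductive invariant $\bar{C}\leq P_{j}$ is precisely the paper's containment (\ref{eq: contenencia de C raya a cada k}), established there by the same induction over the indices of $\mathcal{B}''$, and the paper likewise derives the complementarity, the dimension equality, condition (i) and the codimension bound from it. Two presentational differences actually work in your favour: your elementwise induction step (split $y=n+a$, observe $a\in A_{e_{k_{j}}}'\cap M_{k_{j}}=\bar{A}_{e_{k_{j}}}$) does not require $\bar{C}$ to be one-dimensional, whereas the paper's step argues with a single generator $c$ of $\bar{C}$, citing Remark 4 of \cite{14}; and your identification $\bar{A}_{e_{k}}=\pi_{k}\left(\bar{C}\right)$ replaces the paper's entropy computation $\text{H}\left(\bar{A}_{e_{k}}\right)=\text{I}\left(A_{e_{k}}';\bar{C},\ldots\right)=\text{H}\left(\bar{C}\right)$, to the same effect.

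One justification is wrong, however: the bound $\text{H}\left(\bar{C}\right)\leq1$. You assert that ``$C$ is taken one-dimensional in the situation at hand,'' but $C$ is an arbitrary subspace of $V$, both in Theorem \ref{thm: principal teorema o metodo que usa matrices binarias} and in step C of its proof; nothing reduces the claim to a one-dimensional $C$ (if $C$ were one-dimensional, the codimension bound (\ref{eq: desigualdad de codimension de A raya y A para B''}) would be nearly vacuous). The inequality $\dim\bar{C}\leq1$ is instead a property of the construction of $\bar{C}$ from $\left(A_{e_{1}},\ldots,A_{e_{n}},C\right)$ carried out in \cite{14} --- the same Remark 4 cited above states that $\bar{C}$ is generated by a single vector --- and cannot be deduced from the hypotheses you used. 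Since nothing else in your argument relies on $\dim\bar{C}\leq1$, the repair is simply to invoke that construction for this one assertion rather than a nonexistent hypothesis on $C$.
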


\begin{proof}
We obviously have that $\bar{A}_{e_{k}}\leq\bar{C}+\underset{e_{i}\notin\mathcal{B}''}{\bigoplus}A_{e_{i}}'+\underset{e_{i}\in\mathcal{B}'',i<k}{\bigoplus}\bar{A}_{e_{i}}+\underset{e_{i}\in\mathcal{B}'',i>k}{\bigoplus}A_{e_{i}}'$.
Now, for $k$, we show that 
\begin{equation}
\bar{C}\leq\underset{e_{i}\notin\mathcal{B}''}{\bigoplus}A_{e_{i}}'+\underset{e_{i}\in\mathcal{B}'',i\leq k}{\bigoplus}\bar{A}_{e_{i}}+\underset{e_{i}\in\mathcal{B}'',i>k}{\bigoplus}A_{e_{i}}'\label{eq: contenencia de C raya a cada k}
\end{equation}
In effect, we show case $k=l=\min\left\{ i:e_{i}\in\mathcal{B}''\right\} $,
the general case is by induction. We note that $\bar{A}_{e_{l}}=O$
if and only if $\bar{C}=O$, so this case is trivial. Otherwise, there
exists $a_{e_{l}}\neq O$ in $\bar{A}_{e_{l}}$ and $c\neq O$ in
$\bar{C}$ such that $a_{e_{l}}=c+\underset{i\neq l}{\sum}a_{e_{i}}$,
then $c\in\underset{e_{i}\notin\mathcal{B}''}{\bigoplus}A_{e_{i}}'+\bar{A}_{e_{l}}+\underset{e_{i}\in\mathcal{B}'',i>l}{\bigoplus}A_{e_{i}}'$.
The affirmation is obtained noting that $c$ generates $\bar{C}$
\cite[Remark 4]{14}. Taking $k=\max\left\{ i:e_{i}\in\mathcal{B}''\right\} $,
we obtain that $\bar{C}\leq\bigoplus\bar{A}_{e_{i}}$, so condition
of complementary spaces is satisfied. Also, we have the equation:
\[
\text{H}\left(\bar{A}_{e_{k}}\right)=\text{I}\left(A'_{e_{k}};\bar{C},\underset{e_{i}\notin\mathcal{B}''}{\bigoplus}A_{e_{i}}',\underset{e_{i}\in\mathcal{B}'',i<k}{\bigoplus}\bar{A}_{e_{i}},\underset{e_{i}\in\mathcal{B}'',i>k}{\bigoplus}A_{e_{i}}'\right)
\]
\[
=\text{H}\left(A'_{e_{k}}\right)-\text{H}\left(\underset{e_{i}\notin\mathcal{B}''}{\bigoplus}A_{e_{i}}',\underset{e_{i}\in\mathcal{B}'',i<k}{\bigoplus}\bar{A}_{e_{i}},\underset{e_{i}\in\mathcal{B}'',i\geq k}{\bigoplus}A_{e_{i}}'\right)+\text{H}\left(\bar{C},\underset{e_{i}\notin\mathcal{B}''}{\bigoplus}A_{e_{i}}',\underset{e_{i}\in\mathcal{B}'',i<k}{\bigoplus}\bar{A}_{e_{i}},\underset{e_{i}\in\mathcal{B}'',i>k}{\bigoplus}A_{e_{i}}'\right)
\]
\[
\,\,\,\,\,\,\,\,\,\,\,\,\,\,\,\,\,\,\,\,\,\,\,\,\,\,\,\,\,\,\,\,\,\,\,\,\,\,\,\,\,\,\,\,\,\,\,\,\,\,\,\,\,\,\,\,\,\,\,\,\,\,\,\,\,\,\,\,\,\,\,\,\,\,\,\,\,\,\,\,\,\,\,\,\,\,\,\,\,\,\,\,\,\,\,\,\,\,\,\,\,\,\,\,\,\,\,\,\,\,\,\,\,\,\,\,\,\,\,\,\,\,\,\,\,\,\,\,\,\,\,\,\,\,\,\,\,\,\,\,\,\,\,\,\,\,\,\,\,\,\,\,\,\,\,\,\,\,\,\,\,\,\,\,\,\,\,\,\,\,\,\,\,\,\,\,\,\,\:\,\text{ [from (\ref{eq: contenencia de C raya a cada k})]}
\]
\[
=\text{H}\left(\bar{C}\right)\,\,\,\,\,\,\,\,\,\,\,\,\,\,\,\,\,\,\,\,\,\,\,\,\,\,\,\,\,\,\,\text{ [using condition of complementary spaces]}
\]
which also implies the desired upper bound on $\text{codim}_{A_{e_{k}}}\left(\bar{A}_{e_{k}}\right)$.
Now, condition (i) is straightforward because $\bar{C}$ and each
$\bar{A}_{e_{k}}$, $\ensuremath{e_{k}\in\mathcal{B}''}$, have the
same dimension.
\end{proof}
\begin{claim}
\label{claim: tupla que cumple (i)  (ii) (iii)}For $e_{S_{k}}\in\mathcal{B}'$,
we define $\bar{B}_{e_{S_{k}}}:=B_{e_{S_{k}}}\cap\left(\underset{e_{i}\in S_{k}}{\bigoplus}\bar{A}_{e_{i}}\right)\cap\left(\underset{e_{i}\notin S_{k}}{\bigoplus}\bar{A}_{e_{i}}\oplus\bar{C}\right)$.
We have that tuple $\left(\bar{A}_{e_{1}},\ldots\bar{A}_{e},\bar{B}_{e_{S_{k}}},\bar{C}:e_{S_{k}}\in\mathcal{B}'\right)$
satisfies condition (i), condition (ii), condition (iii) and

\[
\text{codim}_{B_{e_{S_{k}}}}\bar{B}_{e_{S_{k}}}\leq\text{H}\left(B_{e_{S_{k}}}\mid A_{e_{i}}:i\in S_{k}\right)+\text{H}\left(B_{e_{S_{k}}}\mid A_{e_{i}},C:i\notin S_{k}\right)+\underset{e_{i}\in\mathcal{B}''}{\sum}\text{H}\left(A_{e_{i}}\right)
\]
\[
+\nabla\left(A_{e_{i}}:i\in S_{k},e_{i}\notin\mathcal{B}''\right)+\nabla\left(A_{e_{i}}:i\notin S_{k},e_{i}\notin\mathcal{B}''\right)+\left(\left|\mathcal{B}''\right|+1\right)\nabla\left(C\right)-\left|\mathcal{B}''\right|\text{H}\left(C\right)
\]
\end{claim}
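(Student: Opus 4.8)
The plan is to read the three containment conditions straight off the definition of $\bar{B}_{e_{S_k}}$ and then to estimate $\text{codim}_{B_{e_{S_k}}}\bar{B}_{e_{S_k}}$ by breaking the defining triple intersection into two pieces and "transferring" each piece down to the unbarred spaces, where the bounds \eqref{eq:desigualdad de triangulo}, \eqref{eq: codimension de C (raya horizontal arriba)-1} and \eqref{eq: desigualdad de codimension de A raya y A para B''} already give the right estimates.

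\textbf{Conditions (i)--(iii).} Condition (i) for the barred tuple reads $\bar{A}_{e_i}\leq\bigoplus_{j\neq i}\bar{A}_{e_j}\oplus\bar{C}$ for $e_i\in\mathcal{B}''$, and this involves only the spaces already treated in Claim \ref{claim: tupla que cumle (i)}, so adjoining the $\bar{B}_{e_{S_k}}$ changes nothing and (i) is inherited. Conditions (ii) and (iii) are immediate from the definition: $\bar{B}_{e_{S_k}}$ is contained in each factor of its defining intersection, hence in $\bigoplus_{e_i\in S_k}\bar{A}_{e_i}$, which is (ii), and in $\bigoplus_{e_i\notin S_k}\bar{A}_{e_i}\oplus\bar{C}$, which is (iii).

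\textbf{Splitting and transferring the codimension.} Writing $X=\bigoplus_{e_i\in S_k}\bar{A}_{e_i}$ and $Y=\bigoplus_{e_i\notin S_k}\bar{A}_{e_i}\oplus\bar{C}$, submodularity of dimension gives
\[
\text{codim}_{B_{e_{S_k}}}\bar{B}_{e_{S_k}}\leq\text{codim}_{B_{e_{S_k}}}(B_{e_{S_k}}\cap X)+\text{codim}_{B_{e_{S_k}}}(B_{e_{S_k}}\cap Y).
\]
Next I record the elementary transfer lemma: if $U\leq U'$ then $\text{codim}_{W}(W\cap U)\leq\text{codim}_{W}(W\cap U')+\text{codim}_{U'}(U)$, obtained by factoring the chain $W\cap U\leq W\cap U'\leq W$ and bounding the lower step by $\dim U'-\dim U$. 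Applying it with $U'=\sum_{i\in S_k}A_{e_i}$ for the first summand and $U'=\sum_{i\notin S_k}A_{e_i}+C$ for the second extracts exactly the conditional-entropy terms $\text{H}(B_{e_{S_k}}\mid A_{e_i}:i\in S_k)$ and $\text{H}(B_{e_{S_k}}\mid A_{e_i},C:i\notin S_k)$, leaving the two "descent" codimensions $\text{codim}_{\sum_{i\in S_k}A_{e_i}}(X)$ and $\text{codim}_{\sum_{i\notin S_k}A_{e_i}+C}(Y)$.

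\textbf{Estimating the descents.} For each descent I separate the indices in $\mathcal{B}''$ from those outside. On the non-$\mathcal{B}''$ indices, where $\bar{A}_{e_i}=A'_{e_i}$, I route through the intermediate space $\big(\bigoplus_{i\in S_k,\,e_i\notin\mathcal{B}''}A'_{e_i}\big)+\big(\sum_{i\in S_k,\,e_i\in\mathcal{B}''}A_{e_i}\big)$, using that adjoining a fixed subspace to both ends of an inclusion cannot increase codimension, together with \eqref{eq:desigualdad de triangulo}, to bound this part by $\nabla(A_{e_i}:i\in S_k,e_i\notin\mathcal{B}'')$; the symmetric step for $Y$ yields $\nabla(A_{e_i}:i\notin S_k,e_i\notin\mathcal{B}'')$. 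On the $\mathcal{B}''$ indices I replace each $A_{e_i}$ by $\bar{A}_{e_i}$ one at a time, each replacement costing at most $\text{codim}_{A_{e_i}}(\bar{A}_{e_i})$ in the ambient sum, and bound each cost by \eqref{eq: desigualdad de codimension de A raya y A para B''}; in the second descent I also replace $C$ by $\bar{C}$ at cost $\text{codim}_C(\bar{C})\leq\nabla(C)$ from \eqref{eq: codimension de C (raya horizontal arriba)-1}.

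\textbf{Main obstacle.} The delicate point is the coefficient bookkeeping. The $\mathcal{B}''$ indices split as a disjoint union of those with $i\in S_k$ and those with $i\notin S_k$, which together exhaust $\mathcal{B}''$; hence summing the per-index bound \eqref{eq: desigualdad de codimension de A raya y A para B''} over both descents yields $\sum_{e_i\in\mathcal{B}''}\text{H}(A_{e_i})-|\mathcal{B}''|\text{H}(C)+|\mathcal{B}''|\nabla(C)$, and the single additional $\nabla(C)$ from the $C\to\bar{C}$ replacement upgrades the coefficient of $\nabla(C)$ to $|\mathcal{B}''|+1$. Assembling these with the two conditional-entropy terms and the two $\nabla$-terms from the non-$\mathcal{B}''$ indices reproduces the right-hand side of the claim exactly, so the estimate follows once the transfer lemma and the absorption inequality are checked.
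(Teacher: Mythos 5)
Your proposal is correct and follows essentially the same route as the paper's own proof: the same submodularity split of the triple intersection, the same factoring of each piece through the unbarred spaces (your explicit ``transfer lemma'' is exactly the codimension lemmas the paper cites from its earlier work), and the same application of inequalities (\ref{eq: codimension de C (raya horizontal arriba)-1}), (\ref{eq:desigualdad de triangulo}) and (\ref{eq: desigualdad de codimension de A raya y A para B''}) with identical bookkeeping over the disjoint split of $\mathcal{B}''$ into indices inside and outside $S_{k}$. The only difference is cosmetic: you spell out the verification of conditions (i)--(iii) and the transfer lemma, which the paper leaves implicit.
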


\begin{proof}
In effect,
\[
\text{codim}_{B_{e_{S_{k}}}}\bar{B}_{e_{S_{k}}}\leq\text{codim}_{B_{e_{S_{k}}}}\left(\left(\underset{i\in S_{k}}{\bigoplus}\bar{A}_{e_{i}}\right)\cap B_{e_{S_{k}}}\right)+\text{codim}_{B_{e_{S_{k}}}}\left(\left[\underset{i\notin S_{k}}{\bigoplus}\bar{A}_{e_{i}}\oplus\bar{C}\right]\cap B_{e_{S_{k}}}\right)
\]
\[
=\text{codim}_{B_{e_{S_{k}}}}\left(\left(\underset{i\in S_{k}}{\bigoplus}A_{e_{i}}\right)\cap B_{e_{S_{k}}}\right)+\text{codim}_{B_{e_{S_{k}}}}\left(\left[\underset{i\notin S_{k}}{\bigoplus}A_{e_{i}}\oplus\bar{C}\right]\cap B_{e_{S_{k}}}\right)
\]
\[
+\text{codim}_{\left(\underset{i\in S_{k}}{\sum}A_{e_{i}}\right)\cap B_{e_{S_{k}}}}\left(\left(\underset{i\in S_{k}}{\bigoplus}\bar{A}_{e_{i}}\right)\cap B_{e_{S_{k}}}\right)+\text{codim}_{\left[\underset{i\notin S_{k}}{\sum}A_{e_{i}}\oplus\bar{C}\right]\cap B_{e_{S_{k}}}}\left(\left[\underset{i\notin S_{k}}{\bigoplus}\bar{A}_{e_{i}}\oplus\bar{C}\right]\cap B_{e_{S_{k}}}\right)
\]
\[
\leq\text{H}\left(B_{e_{S_{k}}}\mid A_{e_{i}}:i\in S_{k}\right)+\text{H}\left(B_{e_{S_{k}}}\mid C,A_{e_{i}}:i\notin S_{k}\right)+
\]
\[
+\text{codim}_{\underset{i\in S_{k},e_{i}\in\mathcal{B}''}{\sum}A_{e_{i}}}\underset{i\in S_{k},e_{i}\in\mathcal{B}''}{\bigoplus}\bar{A}_{e_{i}}+\text{codim}_{\underset{i\in S_{k},e_{i}\notin\mathcal{B}''}{\sum}A_{e_{i}}}\underset{i\in S_{k},e_{i}\notin\mathcal{B}''}{\bigoplus}A_{e_{i}}+
\]
\[
\text{codim}_{\underset{i\notin S_{k},e_{i}\in\mathcal{B}''}{\sum}A_{e_{i}}\oplus\bar{C}}\underset{i\notin S_{k},e_{i}\in\mathcal{B}''}{\bigoplus}\bar{A}_{e_{i}}\oplus\bar{C}+\text{codim}_{\underset{i\notin S_{k},e_{i}\notin\mathcal{B}''}{\sum}A_{e_{i}}}\underset{i\notin S_{k},e_{i}\notin\mathcal{B}''}{\bigoplus}A_{e_{i}}+\text{codim}_{\bar{C}}\bar{C}
\]
\[
\leq\text{H}\left(B_{e_{S_{k}}}\mid A_{e_{i}}:i\in S_{k}\right)+\text{H}\left(B_{e_{S_{k}}}\mid A_{e_{i}},C:i\notin S_{k}\right)+\underset{i\in S_{k},e_{i}\in\mathcal{B}''}{\sum}\text{H}\left(A_{e_{i}}\right)
\]
\[
+\left|\left\{ e_{i}\in\mathcal{B}'':i\in S_{k}\right\} \right|\left(\nabla\left(C\right)-\text{H}\left(C\right)\right)+\nabla\left(A_{e_{i}}:i\in S_{k},e_{i}\notin\mathcal{B}''\right)+\underset{i\notin S_{k},e_{i}\in\mathcal{B}''}{\sum}\text{H}\left(A_{e_{i}}\right)
\]
\[
+\left|\left\{ e_{i}\in\mathcal{B}'':i\notin S_{k}\right\} \right|\left(\nabla\left(C\right)-\text{H}\left(C\right)\right)+\nabla\left(A_{e_{i}}:i\notin S_{k},e_{i}\notin\mathcal{B}''\right)+\nabla\left(C\right)
\]
\[
\,\,\,\,\,\,\,\,\,\,\,\,\,\,\,\,\,\,\,\,\,\,\,\,\text{ [from Lemmas 1 and 2 of [8] , inequalities (\ref{eq:desigualdad de triangulo}) and (\ref{eq: desigualdad de codimension de A raya y A para B''})].}
\]
\[
=\text{H}\left(B_{e_{S_{k}}}\mid A_{e_{i}}:i\in S_{k}\right)+\text{H}\left(B_{e_{S_{k}}}\mid A_{e_{i}},C:i\notin S_{k}\right)+\underset{e_{i}\in\mathcal{B}''}{\sum}\text{H}\left(A_{e_{i}}\right)
\]
\[
+\left(\left|\mathcal{B}''\right|+1\right)\nabla\left(C\right)-\left|\mathcal{B}''\right|\text{H}\left(C\right)+\nabla\left(A_{e_{i}}:i\in S_{k},e_{i}\notin\mathcal{B}''\right)+\nabla\left(A_{e_{i}}:i\notin S_{k},e_{i}\notin\mathcal{B}''\right)
\]
\end{proof}
\begin{claim}
\label{claim: tupla que cumple (ii) (iv)}For $e_{S_{k}}\in\mathcal{B}'$,
we define $\hat{B}_{e_{S_{k}}}:=B_{e_{S_{k}}}\cap\underset{j\in S_{k}}{\bigoplus}A_{e_{j}}'$
and $\hat{C}:=\bar{C}\underset{e_{S_{k}}}{\bigcap}\left(\underset{j\notin S_{i}}{\bigoplus}A_{e_{j}}'+\hat{B}_{e_{S_{k}}}\right)$.
We have that $\left(A_{e_{i}}',\hat{B}_{e_{S_{j}}},\hat{C}:e_{e_{i}}\in\mathcal{B}'',e_{S_{j}}\in\mathcal{B}'\right)$
satisfies condition of complementary spaces, conditions (ii), (iv)
and
\begin{equation}
\text{codim}_{B_{e_{S_{k}}}}\hat{B}_{e_{S_{k}}}\leq\text{H}\left(B_{e_{S_{k}}}\mid A_{e_{i}}:i\in S_{k}\right)+\nabla\left(A_{e_{i}}:i\in S_{k}\right),\label{eq: desigualdad de codimension de B gorrito y B}
\end{equation}
\[
\text{codim}_{C}\hat{C}\leq\nabla\left(C\right)+\underset{e_{S_{k}}\in\mathcal{B}'}{\sum}\left[\text{H}\left(C\mid A_{e_{i}},B_{S_{k}}:i\notin S_{k}\right)+\text{H}\left(B_{e_{S_{k}}}\mid A_{e_{i}}:i\in S_{k}\right)\right]
\]
\begin{equation}
+\underset{e_{S_{k}}\in\mathcal{B}'}{\sum}\left[+\nabla\left(A_{e_{i}}:i\notin S_{k}\right)+\nabla\left(A_{e_{i}}:i\in S_{k}\right)\right].\label{eq:desigualdad de codimension de C gorrito a C}
\end{equation}
\end{claim}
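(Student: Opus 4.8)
The plan is to verify the three structural conditions directly from the definitions of $\hat{B}_{e_{S_k}}$ and $\hat{C}$, and then to establish the two codimension bounds (\ref{eq: desigualdad de codimension de B gorrito y B}) and (\ref{eq:desigualdad de codimension de C gorrito a C}) by the same repeated submodularity estimates already used in Claim \ref{claim: tupla que cumple (i)  (ii) (iii)}. Throughout I would use two elementary codimension facts: first, the monotonicity property that $\text{codim}_{W}(W\cap U)\leq\text{codim}_{V}(V\cap U)$ whenever $W\leq V$ and $U$ is arbitrary; and second, the building-up property that $\text{codim}_{P+Q}(P'+Q')\leq\text{codim}_{P}(P')+\text{codim}_{Q}(Q')$ whenever $P'\leq P$ and $Q'\leq Q$. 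Both are the codimension inequalities invoked in the preceding proof.

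The structural part is immediate. Condition (ii) holds since $\hat{B}_{e_{S_k}}=B_{e_{S_k}}\cap\bigoplus_{j\in S_k}A_{e_j}'\leq\bigoplus_{j\in S_k}A_{e_j}'$. Condition (iv) holds since $\hat{C}$ is by definition a subspace of $\bigcap_{e_{S_k}}(\bigoplus_{j\notin S_k}A_{e_j}'+\hat{B}_{e_{S_k}})$, hence $\hat{C}\leq\bigoplus_{j\notin S_k}A_{e_j}'+\hat{B}_{e_{S_k}}$ for each $k$. The condition of complementary spaces holds because $(A_{e_1}',\ldots,A_{e_n}',\bar{C})$ already satisfies it and $\hat{C}\leq\bar{C}$, so $(\bigoplus_{i\neq k}A_{e_i}')\cap\hat{C}\leq(\bigoplus_{i\neq k}A_{e_i}')\cap\bar{C}=O$ for every $k$. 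For (\ref{eq: desigualdad de codimension de B gorrito y B}) I would isolate the passage from $A_{e_j}$ to $A_{e_j}'$: writing $P=\bigoplus_{j\in S_k}A_{e_j}$ and $P'=\bigoplus_{j\in S_k}A_{e_j}'$, the two facts above give $\text{codim}_{B_{e_{S_k}}}(B_{e_{S_k}}\cap P')\leq\text{codim}_{B_{e_{S_k}}}(B_{e_{S_k}}\cap P)+\text{codim}_{P}(P')$. The first summand equals $\text{H}(B_{e_{S_k}}\mid A_{e_i}:i\in S_k)$ by definition of conditional entropy, and the second is at most $\nabla(A_{e_i}:i\in S_k)$ by (\ref{eq:desigualdad de triangulo}); this is exactly (\ref{eq: desigualdad de codimension de B gorrito y B}).

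The main obstacle is (\ref{eq:desigualdad de codimension de C gorrito a C}), because $\hat{C}$ is a long intersection. I would first peel off $\bar{C}$ via $\text{codim}_{C}\hat{C}=\text{codim}_{C}\bar{C}+\text{codim}_{\bar{C}}\hat{C}$, bounding $\text{codim}_{C}\bar{C}\leq\nabla(C)$ by (\ref{eq: codimension de C (raya horizontal arriba)-1}). Setting $X_k:=\bigoplus_{j\notin S_k}A_{e_j}'+\hat{B}_{e_{S_k}}$, submodularity over $\hat{C}=\bar{C}\cap\bigcap_k X_k$ gives $\text{codim}_{\bar{C}}\hat{C}\leq\sum_k\text{codim}_{\bar{C}}(\bar{C}\cap X_k)$. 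For each $k$ I set $Y_k:=\bigoplus_{j\notin S_k}A_{e_j}+B_{e_{S_k}}\geq X_k$ and split once more: $\text{codim}_{\bar{C}}(\bar{C}\cap X_k)\leq\text{codim}_{\bar{C}}(\bar{C}\cap Y_k)+\text{codim}_{\bar{C}\cap Y_k}(\bar{C}\cap X_k)$. By monotonicity (using $\bar{C}\leq C$) the first term is at most $\text{codim}_{C}(C\cap Y_k)=\text{H}(C\mid A_{e_i},B_{S_k}:i\notin S_k)$, while the second, using $X_k\leq Y_k$ and monotonicity again, is at most $\text{codim}_{Y_k}(X_k)$, which by the building-up fact is at most $\text{codim}_{\bigoplus_{j\notin S_k}A_{e_j}}(\bigoplus_{j\notin S_k}A_{e_j}')+\text{codim}_{B_{e_{S_k}}}(\hat{B}_{e_{S_k}})\leq\nabla(A_{e_i}:i\notin S_k)+\text{H}(B_{e_{S_k}}\mid A_{e_i}:i\in S_k)+\nabla(A_{e_i}:i\in S_k)$, the last step using (\ref{eq:desigualdad de triangulo}) and the already proven (\ref{eq: desigualdad de codimension de B gorrito y B}). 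Summing over $k$ and restoring $\nabla(C)$ reproduces (\ref{eq:desigualdad de codimension de C gorrito a C}) term by term. The only delicate point is the monotonicity step that replaces the ambient $\bar{C}$ by $C$ in the conditional-entropy term and trades an intersection codimension for a codimension inside $Y_k$; the rest is bookkeeping of the same kind as in Claim \ref{claim: tupla que cumple (i)  (ii) (iii)}.
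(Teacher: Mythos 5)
Your proposal is correct and takes essentially the same route as the paper's own proof: the same telescoping codimension decomposition, the same monotonicity/subadditivity facts for codimensions (the paper's ``Lemmas 1 and 2 of [8]''), and the same final bounds via (\ref{eq:desigualdad de triangulo}) and (\ref{eq: desigualdad de codimension de B gorrito y B}); your choice of $\bar{C}$ rather than $C$ as the ambient space in the intermediate steps is only a cosmetic variation. You additionally verify the structural conditions and inequality (\ref{eq: desigualdad de codimension de B gorrito y B}), which the paper omits entirely (it says ``We only show last inequality''), so your write-up is if anything more complete.
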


\begin{proof}
We only show last inequality:
\[
\text{codim}_{C}\hat{C}\leq\text{codim}_{C}\bar{C}+\underset{e_{S_{k}}\in\mathcal{B}'}{\sum}\text{codim}_{C}\left(C\cap\left[\underset{i\notin S_{k}}{\bigoplus}A_{e_{i}}'+\hat{B}_{e_{S_{k}}}\right]\right)
\]
\[
=\text{codim}_{C}\bar{C}+\underset{e_{S_{k}}\in\mathcal{B}'}{\sum}\text{codim}_{C}\left(C\cap\left[\underset{i\notin S_{k}}{\bigoplus}A_{e_{i}}+B_{e_{S_{k}}}\right]\right)
\]
\[
+\underset{e_{S_{k}}\in\mathcal{B}'}{\sum}\text{codim}_{C\cap\left[\underset{i\notin S_{k}}{\sum}A_{e_{i}}+B_{e_{S_{k}}}\right]}\left(C\cap\left[\underset{i\notin S_{k}}{\bigoplus}A_{e_{i}}'+\hat{B}_{e_{S_{k}}}\right]\right)
\]
\[
\leq\text{codim}_{C}\bar{C}+\underset{e_{S_{k}}\in\mathcal{B}'}{\sum}\text{codim}_{C}\left(C\cap\left[\underset{i\notin S_{k}}{\bigoplus}A_{e_{i}}+B_{e_{S_{k}}}\right]\right)+\underset{e_{S_{k}}\in\mathcal{B}'}{\sum}\text{codim}_{\underset{i\notin S_{k}}{\sum}A_{e_{i}}}\underset{i\notin S_{k}}{\bigoplus}A_{e_{i}}'
\]
\[
+\underset{e_{S_{k}}\in\mathcal{B}'}{\sum}\text{codim}_{B_{e_{S_{k}}}}\hat{B}_{e_{S_{k}}}\text{ [from Lemmas 1 and 2 of [8] and inequality (\ref{eq: desigualdad de codimension de B gorrito y B})]}
\]
\[
\leq\nabla\left(C\right)+\underset{e_{S_{k}}\in\mathcal{B}'}{\sum}\left[\text{H}\left(C\mid A_{e_{i}},B_{S_{k}}:i\notin S_{k}\right)+\text{H}\left(B_{e_{S_{k}}}\mid A_{e_{i}}:i\in S_{k}\right)\right]
\]
\[
+\underset{e_{S_{k}}\in\mathcal{B}'}{\sum}\left[+\nabla\left(A_{e_{i}}:i\notin S_{k}\right)+\nabla\left(A_{e_{i}}:i\in S_{k}\right)\right]\text{ [from (\ref{eq:desigualdad de triangulo})}\text{]}
\]
\end{proof}
We can finally build the inequalities of our theorem:

By Claims \ref{claim: tupla que cumle (i)} and \ref{claim: tupla que cumple (i)  (ii) (iii)},
tuple $\left(\bar{A}_{e_{i}},\bar{B}_{e_{S_{j}}},\bar{C}:e_{i}\in\left[e_{n}\right],e_{S_{j}}\in\mathcal{B}'\right)$
satisfies hypothesis of the Claim \ref{claim: desigualdades condicionales}
with conditions (i), (ii) and (iii) in a finite field $\mathbb{F}$
whose field characteristic divides $t$, we get
\begin{equation}
\text{H}\left(\bar{B}_{e_{S_{i}}},\bar{A}_{e_{j}},\bar{C}:e_{S_{i}}\in\mathcal{B}',e_{j}\in\mathcal{B}'',\bar{C}\in\mathcal{B}'''\right)\leq\left(m-1\right)\text{H}\left(\bar{C}\right)\text{.}\label{eq:-17}
\end{equation}
On the other hand,
\begin{equation}
\text{H}\left(\bar{C}\right)\leq\text{I}\left(A_{\left[e_{n}\right]};C\right)\text{\,\,\,\,[from \ensuremath{\bar{C}\leq C}]},\label{eq:-18}
\end{equation}
\[
\text{codim}_{\underset{e_{S_{k}}\in\mathcal{B}'}{\sum}B_{S_{i}}}\underset{e_{S_{k}}\in\mathcal{B}'}{\sum}\bar{B}_{S_{1}}\leq\underset{e_{S_{k}}\in\mathcal{B}'}{\sum}\text{codim}_{B_{e_{S_{i}}}}\bar{B}_{e_{S_{i}}}\text{\,\,\,\,[from Lemmas 1 of [8]].}
\]
Therefore,
\[
\text{codim}_{\underset{e_{S_{k}}\in\mathcal{B}'}{\sum}B_{S_{i}}+\underset{e_{i}\in\mathcal{B}''}{\sum}A_{e_{i}}}\left(\underset{e_{S_{k}}\in\mathcal{B}'}{\sum}\bar{B}_{S_{i}}+\underset{e_{i}\in\mathcal{B}''}{\sum}\bar{A}_{e_{i}}\right)\leq\underset{e_{S_{k}}\in\mathcal{B}'}{\sum}\text{H}\left(B_{e_{S_{k}}}\mid A_{e_{i}}:i\in S_{k}\right)
\]
\[
\underset{e_{S_{k}}\in\mathcal{B}'}{\sum}\text{H}\left(B_{e_{S_{k}}}\mid A_{e_{i}},C:i\notin S_{k}\right)+\left(\left|\mathcal{B}'\right|+1\right)\underset{e_{i}\in\mathcal{B}''}{\sum}\text{H}\left(A_{e_{i}}\right)
\]
\[
+\left(\left|\mathcal{B}''\right|\left|\mathcal{B}'\right|+\left|\mathcal{B}''\right|+\left|\mathcal{B}'\right|\right)\nabla\left(C\right)-\left(\left|\mathcal{B}''\right|\left|\mathcal{B}'\right|+\left|\mathcal{B}''\right|\right)\text{H}\left(C\right)
\]
\[
+\underset{e_{S_{k}}\in\mathcal{B}'}{\sum}\left[\nabla\left(A_{e_{i}}:i\in S_{k},e_{i}\notin\mathcal{B}''\right)+\nabla\left(A_{e_{i}}:i\notin S_{k},e_{i}\notin\mathcal{B}''\right)\right].
\]
From (\ref{eq:-17}) , (\ref{eq:-18}), (\ref{eq: codimension de C (raya horizontal arriba)-1})
and last inequality, we can obtain the desired characteristic-dependent
linear rank inequality over fields whose characteristic divides $t$.

By Claims \ref{claim: tupla que cumple (ii) (iv)}, tuple $\left(A_{e_{i}}',\hat{B}_{e_{S_{j}}},\hat{C}:e_{i}\in\left[e_{n}\right],e_{S_{j}}\in\mathcal{B}'\right)$
satisfies hypothesis of the Claim \ref{claim: desigualdades condicionales}
with conditions (ii) and (iv) in a finite field $\mathbb{F}$ whose
field characteristic does not divide $t$, we get
\begin{equation}
m\text{H}\left(\hat{C}\right)\leq\text{H}\left(A_{e_{i}}',\hat{B}_{e_{S_{j}}},\hat{C}:e_{S_{j}}\in\mathcal{B}',e_{i}\in\mathcal{B}'',\hat{C}\in\mathcal{B}'''\right).\label{eq:Caso distinto de 3 desigualdad condicional basica-1}
\end{equation}
On the other hand,
\begin{equation}
\text{H}\left(A_{e_{i}}',\hat{B}_{e_{S_{j}}},\hat{C}:e_{S_{j}}\in\mathcal{B}',e_{i}\in\mathcal{B}'',\hat{C}\in\mathcal{B}'''\right)\leq\text{H}\left(A_{e_{i}},B_{e_{S_{j}}},C:e_{S_{j}}\in\mathcal{B}',e_{i}\in\mathcal{B}'',C\in\mathcal{B}'''\right).\label{eq:-20}
\end{equation}
From (\ref{eq:Caso distinto de 3 desigualdad condicional basica-1})
, (\ref{eq:desigualdad de codimension de C gorrito a C}) and last
inequality, we can obtain the desired characteristic-dependent linear
rank inequality over fields whose characteristic does not divide $t$.
\begin{rem}
In case that the dimension of $V$ is at most $n-1$, there exists
some $A_{e_{i}}'=O$ in the demonstration above. Therefore, the equation
given by the matrix used as a guide is trivial. This implies Corollary
\ref{cor:corolario sobre cota de la dimension para que las desigualdades sean validas ind de caract}.
Corollary \ref{cor:corolario sobre eliminai=0000F3n de variable para que la desi sea rango lineal}
is obtained in a similar way.
\end{rem}

\section*{Acknowledgments}

The first author thanks the support provided by COLCIENCIAS.


\begin{thebibliography}{1}
\bibitem[1]{4}A. Blasiak, R. Kleinberg \& E. Lubetzky. Lexicographic
Products and the Power of non-Linear Network Coding. IEEE Symposium
on Foundations of Computer Science, pp. 609--618, 2011.

\bibitem[2]{6}R. Dougherty, C. Freiling \& K. Zeger. Insufficiency
of Linear Coding in Network Information Flow. IEEE Transactions on
Information Theory, vol. 51, no. 8, pp. 2745--2759, 2005.

\bibitem[3]{8}R. Dougherty, C. Freiling \& K. Zeger. Linear Rank
Inequalities on Five or More Variables. In arXiv 0910.0284, 2010.

\bibitem[4]{9}R. Dougherty, C. Freiling \& K. Zeger. Achievable Rate
Regions for Network Coding. IEEE Transactions on Information Theory,
vol. 61, no. 5, pp. 2488--2509, 2015. Also arXiv 1311.4601, 2013.

\bibitem[5]{11}E. F. Freiling. Characteristic Dependent Linear Rank
Inequalities and Applications to Network Coding. Ph.D. thesis, 2014.

\bibitem[6]{12}A. W. Ingleton. Representation of Matroids. Combinatorial
mathematics and its applications, pp. 149--167, Oxford, 1969.

\bibitem[7]{13}R. Kinser. New Inequalities for Subspace Arrangements.
Journal Combinatorial Theory Serie A, vol. 118, no.1, pp. 152--161,
2011.

\bibitem[8]{14}V. Peña \& H. Sarria. Characteristic-Dependent Linear
Rank Inequalities via Complementary Vector Spaces. Pre-print, 2018.

\bibitem[9]{15}A. Shen, D. Hammer, A. E. Romashchenko \& N.K. Vereshchagin.
Inequalities for Shannon Entropy and Kolmogorov Complexity. Journal
of Computer and Systems Sciences, vol. 60, pp. 442--464, 2000.
\end{thebibliography}
\end{document}